\newtheorem*{rep@theorem}{\rep@title}
\newcommand{\newreptheorem}[2]{
\newenvironment{rep#1}[1]{
 \def\rep@title{#2 \ref{##1}}
 \begin{rep@theorem}\itshape}
 {\end{rep@theorem}}}
\begin{document}

\title{A composition theorem for the Fourier Entropy-Influence
  conjecture}

\author{Ryan O'Donnell \\ 
Carnegie Mellon University \\ 
{\tt odonnell@cs.cmu.edu}
\thanks{Supported by NSF grants
    CCF-0747250 and CCF-1116594, and a Sloan fellowship. This material
    is based upon work supported by the National Science Foundation
    under grant numbers listed above. Any opinions, findings and
    conclusions or recommendations expressed in this material are
    those of the author and do not necessarily reflect the views of
    the National Science Foundation (NSF).}  \and Li-Yang Tan \\
    Columbia University \\ 
    {\tt liyang@cs.columbia.edu}
  \thanks{Research done while visiting CMU.}}  

%% \author{Ryan O'Donnell  \thanks{Supported by NSF grants CCF-0747250 and
%%     CCF-0915893, and by a Sloan fellowship.} \\ Department of Computer
%%   Science \\ Carnegie
%%   Mellon University \\ {\tt
%%     odonnell@cs.cmu.edu} \and Li-Yang
%%   Tan \thanks{Research done while visiting CMU.}\\ Department of
%%   Computer Science \\ Columbia University \\  {\tt
%%     liyang@cs.columbia.edu}
%% }

\maketitle

\begin{abstract}
The Fourier Entropy-Influence (FEI) conjecture of Friedgut and
Kalai \cite{FK96b} seeks to relate two fundamental measures of Boolean
function complexity: it states that $\bH[f] \leq C\cdot \Inf[f]$ holds
for every Boolean function $f$, where $\bH[f]$ denotes the spectral
entropy of $f$, $\Inf[f]$ is its total influence, and $C > 0$ is a
universal constant.  Despite significant interest in the conjecture it
has only been shown to hold for a few classes of Boolean
functions. \medskip

Our main result is a composition theorem for the FEI conjecture.  We
show that if $g_1,\ldots,g_k$ are functions over disjoint sets of
variables satisfying the conjecture, and if the Fourier transform of
$F$ taken with respect to the product distribution with biases
$\E[g_1], \ldots, \E[g_k]$ satisfies the conjecture, then their
composition $F(g_1(x^1),\ldots,g_k(x^k))$ satisfies the conjecture. As
an application we show that the FEI conjecture holds for read-once
formulas over arbitrary gates of bounded arity, extending a recent
result~\cite{OWZ11b} which proved it for read-once decision trees.
Our techniques also yield an explicit function with the largest known
ratio of $C \geq 6.278$ between $\bH[f]$ and $\Inf[f]$, improving on
the previous lower bound of $4.615$.
\end{abstract}

\newcommand{\lnote}[1]{\footnote{\color{blue}Li-Yang : {#1}}}
\newcommand{\todo}[1]{{\color{red} #1}}

\section{Introduction}

A longstanding and important open problem in the field of Analysis of
Boolean Functions is the Fourier Entropy-Influence conjecture
made by Ehud Friedgut and Gil Kalai in 1996~\cite{FK96b,Kal07}. The
conjecture seeks to relate two fundamental analytic measures of
Boolean function complexity, the spectral entropy and total
influence:\medskip

\noindent {\bf Fourier Entropy-Influence (FEI) Conjecture.} {\it
There exists a universal constant $C > 0$ such that for every Boolean
function $f\isafunc$, it holds that $\bH[f]\leq C\cdot \Inf[f]$. That
is,
\[ \sumS \hatf(S)^2 \log_2\left(\frac1{\hatf(S)^2}\right) \leq C \sumS
|S|\cdot \hatf(S)^2. \] }

Applying Parseval's identity to a Boolean function $f$ we get
$\sumS\hatf(S)^2 = \E[f(\bx)^2] = 1$, and so the Fourier coefficients
of $f$ induce a probability distribution $ \mathscr{S}_f$ over the
$2^n$ subsets of $[n]$ wherein $S\sse [n]$ has ``weight'' (probability mass)
$\hatf(S)^2$. The \emph{spectral entropy} of $f$, denoted $\bH[f]$, is
the Shannon entropy of $\mathscr{S}_f$, quantifying how spread out the
Fourier weight of $f$ is across all $2^n$ monomials. The influence of
a coordinate $i\in [n]$ on $f$ is $\Inf_i[f] =
\Pr[f(\bx)\neq f(\bx^{\oplus i})]$\footnote{All probabilities and
  expectations are with respect to the uniform distribution unless
  otherwise stated.}, where $\bx^{\oplus i}$ denotes $\bx$ with its $i$-th
bit flipped, and the \emph{total influence} of $f$ is simply $\Inf[f]
= \sumi\Inf_i[f]$.  Straightforward Fourier-analytic calculations show
that this combinatorial definition is equivalent to the
quantity $\E_{\bS\sim\mathscr{S}_f}[|\bS|] = \sumS |S|\cdot\hatf(S)^2$,
and so total influence measures the degree distribution of the
monomials of $f$, weighted by the squared-magnitude of its coefficients.
Roughly speaking then, the FEI conjecture states that a Boolean
function whose Fourier weight is well ``spread out'' ({\it i.e.}
has high spectral entropy) must have a significant portion of its
Fourier weight lying on high degree monomials ({\it i.e.} have
high total influence).\footnote{The assumption that $f$
  is Boolean-valued is crucial here, as the same conjecture is false
  for functions $f:\bn\to\R$ satisfying $\sumS\hatf(S)^2 = 1$.  The
  canonical counterexample is $f(x) = \frac1{\sqrt{n}} \sumi x_i$
  which has total influence $1$ and spectral entropy $\log_2n$.}\medskip

In addition to being a natural question concerning the Fourier
spectrum of Boolean functions, the FEI conjecture also has important
connections to several areas of theoretical computer science and
mathematics. %% We give a brief overview of some of the implications
%% here, referring the reader to \cite{Kal07, OWZ11b} and the references
%% therein for details.
Friedgut and Kalai's original motivation was to understand general
conditions under which monotone graph properties exhibit sharp
thresholds, and the FEI conjecture captures the intuition that having
significant symmetry, hence high spectral entropy, is one such
condition.  Besides its applications in the study of random graphs,
the FEI conjecture is known to imply the celebrated Kahn-Kalai-Linial theorem
\cite{KKL88}%% (in fact a
%% slightly weaker version of the conjecture, the Fourier Min-Entropy
%% Conjecture \cite{OWZ11b}, suffices)
: \medskip

\noindent {\bf KKL Theorem.} \medskip
{\it For every Boolean function $f$ there exists an $i\in [n]$ such that
$\Inf_i[f] = \Var[f]\cdot \Omega(\frac{\log n}{n})$.} \medskip

The FEI conjecture also implies Mansour's conjecture \cite{Man94}: \medskip

\noindent {\bf Mansour's Conjecture.} {\it Let $f$ be a Boolean
  function computed by a $t$-term DNF formula. For any constant
  $\eps > 0$ there exists a
  collection $\calS\sse 2^{[n]}$ of cardinality $\poly(t)$ such that
  $\sum_{S\in\calS}\hatf(S)^2 \geq 1-\eps$.}\medskip

    Combined with recent work of Gopalan
    {\it et al.}\ \cite{GKK08}, Mansour's conjecture yields an
    efficient algorithm for agnostically learning the class of
    $\poly(n)$-term DNF formulas from queries. This would resolve a
    central open problem in computational learning theory
    \cite{GKK08a}. De \etal also noted that sufficiently strong
    versions of Mansour's conjecture would yield improved pseudorandom
    generators for depth-$2$ ${\sf AC}^0$ circuits \cite{DETT10}.
    More generally, the FEI conjecture implies the existence of sparse
    $L_2$-approximators for Boolean functions with small total
    influence: \medskip

    \noindent {\bf Sparse $L_2$-approximators.} Assume the FEI
    conjecture holds. Then for every Boolean function $f$ there exists
    a $2^{O(\Inf[f]/\eps)}$-sparse polynomial $p:\R^n\to\R$ such that
    $\E[(f(\bx)-p(\bx))^2]\leq \eps$.\medskip

    By Friedgut's junta theorem \cite{Fri98}, the above holds
    unconditionally with a weaker bound of
    $2^{O(\Inf[f]^2/\eps^2)}$. This is the main technical ingredient
    underlying several of the best known uniform-distribution
    learning algorithms \cite{Ser04a,OS08b}.\medskip

For more on the FEI conjecture we refer the reader to
Kalai's blog post~\cite{Kal07}.

\subsection{Our results}

Our research is motivated by the following question:

\begin{question}
\label{q:compose}
 Let $F:\bits^k\to\bits$ and
  $g_1,\ldots,g_k:\bits^\ell\to\bits$.  What properties do $F$ and
  $g_1,\ldots,g_k$ have to satisfy for the FEI conjecture to
  hold for the disjoint composition $f(x^1,\ldots,x^k) =
  F(g_1(x^1),\ldots,g_k(x^k))$?
\end{question}

Despite its simplicity this question has not been well understood.
For example, prior to our work the FEI conjecture was open even for
read-once DNFs (such as the ``tribes'' function); these are
the disjoint compositions of $F={\sf OR}$ and $g_1,\ldots,g_k={\sf
  AND}$, perhaps two of the most basic Boolean functions with
extremely simple Fourier spectra.  Indeed, Mansour's conjecture, a
weaker conjecture than FEI, was only recently shown to hold for
read-once DNFs~\cite{KLW10,DETT10}.  Besides being a fundamental
question concerning the behavior of spectral entropy and total
influence under composition, Question \ref{q:compose} (and our answer
to it) also has implications for a natural approach towards disproving
the FEI conjecture; we elaborate on this at the end of this
section. \medskip

  A particularly appealing and general
answer to Question~\ref{q:compose} that one may hope for would be the following: ``if $\bH[F]\leq C_1\cdot \Inf[F]$ and $\bH[g_i] \leq C_2 \cdot
\Inf[g_i]$ for all $i\in [k]$, then $\bH[f] \leq \max\{C_1,C_2\}\cdot
\Inf[f]$.''
%% where $C = \max\{C_1,C_2\}$\footnote{Certainly such a statement is
%%   only interesting if $C$ depends on $C_1$ and $C_2$, as otherwise it
%%   is vacuously equivalent to the FEI conjecture for $f$.}.
While this is easily seen to be false\footnote{For example, by
  considering $F = {\sf
  OR}_2$, the $2$-bit disjunction, and $g_1,g_2 = {\sf AND}_2$, the
$2$-bit conjunction.},  %% A simple calculation shows that $\bH[{\sf
%%     OR}_2]/\Inf[{\sf OR}_2] = \bH[{\sf AND}_2]/\Inf[{\sf AND}_2] = 2$
%% and yet $\bH[f]/\Inf[f] \approx 2.21$ where $f = (x_1\wedge x_2) \vee
%% (x_3\wedge x_4)$. \medskip
our main result shows that this proposed answer to Question
\ref{q:compose} \emph{is} in fact true for a carefully chosen
sharpening of the FEI conjecture. To arrive at a formulation that
bootstraps itself, we first consider a slight strengthening of the FEI
conjecture which we call FEI$^+$, and then work with a generalization
of FEI$^+$ that concerns the Fourier spectrum of $f$ not just with
respect to the uniform distribution, but an arbitrary product
distribution over $\bn$:

\begin{conjecture}[FEI$^+$ for product distributions]
\label{conj:prod-FEI++}
There is a universal constant $C > 0$ such that the following holds.
Let $\mu=\la \mu_1,\ldots,\mu_n\ra$ be any sequence of biases and
$f:\bn_\mu\to\bits$.  Here the notation $\bn_\mu$ means that we think
of $\bn$ as being endowed with the $\mu$-biased product probability
distribution in which $\Ex_\mu[x_i] = \mu_i$ for all $i\in [n]$.  Let
$\{\wtf(S)\}_{S\sse [n]}$ be the $\mu$-biased Fourier coefficients of
$f$. Then
\[ \sum_{S\neq\emptyset}\wtf(S)^2\log\left(\frac{\prod_{i\in
      S}(1-\mu_i^2)}{\wtf(S)^2}\right) \leq C \cdot
  (\Inf^\mu[f]-\Varx_\mu[f]).  \]
\end{conjecture}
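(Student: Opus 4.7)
The plan is to approach this in stages, since the conjecture is strictly stronger than the original FEI conjecture (recovered by setting $\mu_i = 0$ for all $i$, in which case $\prod_{i \in S}(1-\mu_i^2) = 1$ and the LHS reduces to $\bH[f]$ restricted to nonempty Fourier coefficients). A full proof in complete generality is therefore at least as hard as the main open problem of the field, so I would aim to verify the conjecture for a rich enough class of base functions and then invoke the paper's forthcoming composition theorem to extend it to broader classes---in particular to read-once formulas, as the abstract promises.

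First I would rewrite the LHS by splitting the logarithm as
\[
\sum_{S \neq \emptyset} \wtf(S)^2 \log\frac{1}{\wtf(S)^2} \;+\; \sum_{S \neq \emptyset} \wtf(S)^2 \sum_{i \in S} \log(1-\mu_i^2),
\]
so that the first piece is a truncated Shannon entropy and the second is a negative correction calibrated to the bias. On the RHS, a routine identity gives $\Inf^\mu[f] - \Varx_\mu[f] = \sum_{S \neq \emptyset}(|S|-1)\wtf(S)^2$, which vanishes on singletons. I would then verify the inequality on the atomic cases needed as base steps for composition: dictators (for which both sides are zero), constants (trivially zero), and small gates such as $\mathsf{AND}_k$, $\mathsf{OR}_k$, and parities, whose $\mu$-biased Fourier expansions factorize over coordinates in the basis $\phi_i(x) = (x_i-\mu_i)/\sqrt{1-\mu_i^2}$ and admit clean closed forms. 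Once these are in hand, the composition theorem reduces the conjecture on $F(g_1,\ldots,g_k)$ to the conjecture on each $g_i$ under $\mu$ and on $F$ under the product distribution with biases $\E[g_i]$, giving a modular proof strategy.

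The main obstacle I anticipate is the singleton contribution. When $\wtf(\{i\})^2 < 1-\mu_i^2$ (which is typical once $f$ has Fourier mass beyond degree one), the $|S|=1$ terms on the LHS are strictly positive while contributing zero to the RHS, so one has to argue that this positive mass is absorbed by slack coming from $|S|\geq 2$. The factor $\prod_{i \in S}(1-\mu_i^2)$ inside the log is precisely what makes the composition bookkeeping work out cleanly---when a variable $x_i$ is replaced by a subfunction of expectation $\mu_i$ and variance $1-\mu_i^2$, this factor cancels against the variance inflation---but verifying that this normalization is also tight enough to dominate the singleton contribution in full generality is the subtle point, and is exactly why one expects this to require the composition-theorem machinery rather than a direct attack.
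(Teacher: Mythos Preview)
This statement is a \emph{conjecture}, not a theorem: the paper does not prove it in general, and you correctly observe that doing so would subsume the original FEI conjecture. So there is no ``paper's own proof'' to compare against here; what the paper does is (i) show that the uniform-distribution case implies FEI, (ii) prove the composition theorem (Theorem~\ref{thm:compose}), and (iii) establish the conjecture for every $k$-variable function with factor $2^{O(k)}$ independent of $\mu$ (Theorem~\ref{thm:product-FEI-bound}), then combine (ii) and (iii) to get the read-once result. Your proposal is not a proof attempt for the conjecture itself but rather an outline of exactly this program, and in that sense it is aligned with the paper.

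Two points of divergence are worth flagging. First, your plan to ``verify the inequality on \dots\ small gates such as $\mathsf{AND}_k$, $\mathsf{OR}_k$, and parities'' is narrower than what the paper actually does: checking specific gates would only yield read-once formulas over those gates, whereas the paper proves a uniform $2^{O(k)}$ bound for \emph{every} $k$-ary $F$ and every product measure $\mu$. The mechanism is not a case analysis but Lemma~\ref{lem:one-term-in-ent}, which bounds each individual entropy term $\wt{F}(S)^2\log\bigl(\prod_{i\in S}\sigma_i^2/\wt{F}(S)^2\bigr)$ by $2^{O(k)}\cdot\Var_\mu[D_{\phi_j^\mu}F]$ for any $j\in S$, and then sums. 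Second, the paper itself warns that the case-by-case route you sketch is deceptively hard: even verifying the conjecture for $\mathsf{MAJ}_3$ over all product distributions ``turns out to be technically cumbersome,'' so the ``clean closed forms'' you anticipate for $\mathsf{AND}_k$ and $\mathsf{OR}_k$ under arbitrary biases may not lead to a short argument. Your identification of the singleton issue and of the role of the $\prod_{i\in S}(1-\mu_i^2)$ normalization in making composition work is accurate and matches the paper's motivation.
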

\noindent We write $\bH^\mu[f]$ to denote the quantity $\sumS
\wtf(S)^2\log\left(\prod_{i\in
      S}(1-\mu_i^2)/\wtf(S)^2\right)$, and so the inequality of
Conjecture \ref{conj:prod-FEI++} can be equivalently stated as
$\bH^\mu[f^{\geq 1}] \leq C\cdot (\Inf^\mu[f]-\Var_\mu[f])$.\medskip

In Proposition \ref{prop:fei+-implies-fei} we show that Conjecture
\ref{conj:prod-FEI++} with $\mu = \la 0,\ldots,0\ra$ (the uniform distribution) implies the FEI conjecture.  We say that a
Boolean function $f$ ``satisfies $\mu$-biased FEI$^+$ with factor
  $C$'' if the $\mu$-biased Fourier transform of $f$ satisfies the
inequality of Conjecture \ref{conj:prod-FEI++}.  Our main result,
which we prove in Section \ref{sec:composition-lemma}, is a
composition theorem for FEI$^+$:

\begin{theorem}
\label{thm:compose}
  Let $f(x^1,\ldots,x^k) = F(g_1(x^1),\ldots,g_k(x^k))$, where the
  domain of $f$ is endowed with a product distribution $\mu$. Suppose
  $g_1,\ldots,g_k$ satisfy $\mu$-biased FEI$^+$ with factor $C_1$ and
  $F$ satisfies $\eta$-biased FEI$^+$ with factor $C_2$, where $\eta =
  \la \Ex_\mu[g_1],\ldots,\Ex_\mu[g_k]\ra$.  Then $f$ satisfies
  $\mu$-biased FEI$^+$ with factor $\max\{C_1,C_2\}$.
\end{theorem}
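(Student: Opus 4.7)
The plan is to unpack the $\mu$-biased Fourier expansion of $f$ through the composition structure, then compare both sides of the desired FEI$^+$ inequality term by term against the hypothesized bounds on $F$ and on the $g_j$'s.

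First I would derive the composition identity for Fourier coefficients. Since the $x^j$'s lie on disjoint variable blocks and each $g_j$ is $\pm1$-valued with $\Ex_\mu[g_j]=\eta_j$, the joint distribution of $(g_1(x^1),\ldots,g_k(x^k))$ under $\mu$ is exactly the $\eta$-biased product distribution on $\bits^k$. Expanding $F$ in its $\eta$-biased basis and each normalized $(g_j-\eta_j)/\sqrt{1-\eta_j^2}$ in its $\mu$-biased basis yields, for every $U\sse [n]$ with block decomposition $U=(U_1,\ldots,U_k)$ and $S(U):=\{j:U_j\neq\emptyset\}$,
\[
\widetilde{f}(U) \;=\; \frac{\widetilde{F}(S(U))}{\prod_{j\in S(U)}\sqrt{1-\eta_j^2}}\,\prod_{j\in S(U)}\widetilde{g_j}(U_j).
\]

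Next I would plug this identity into the FEI$^+$ entropy quantity for $f$. The key simplification is that $\prod_{i\in U}(1-\mu_i^2)$ factors across blocks as $\prod_{j\in S(U)}\prod_{i\in U_j}(1-\mu_i^2)$, so the logarithm $\log\bigl(\prod_{i\in U}(1-\mu_i^2)/\widetilde{f}(U)^2\bigr)$ splits additively into an $F$-piece and a sum of $g_j$-pieces. Swapping the order of summation and using the Parseval-type identity $\sum_{T\neq\emptyset}\widetilde{g_j}(T)^2 = 1-\eta_j^2$, I expect to obtain the clean decomposition
\[
\bH^\mu[f^{\geq 1}] \;=\; \bH^\eta[F^{\geq 1}] \;+\; \sum_{j=1}^k \frac{\Inf_j^\eta[F]}{1-\eta_j^2}\,\bH^\mu[g_j^{\geq 1}],
\]
together with the parallel identities $\Inf^\mu[f] = \sum_{j} \frac{\Inf_j^\eta[F]}{1-\eta_j^2}\Inf^\mu[g_j]$ and $\Varx_\mu[f] = \Varx_\eta[F]$.

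Finally I would apply the two FEI$^+$ hypotheses $\bH^\eta[F^{\geq 1}] \leq C_2(\Inf^\eta[F]-\Varx_\eta[F])$ and $\bH^\mu[g_j^{\geq 1}] \leq C_1(\Inf^\mu[g_j]-(1-\eta_j^2))$ and verify that the resulting upper bound on $\bH^\mu[f^{\geq 1}]$ does not exceed $\max\{C_1,C_2\}(\Inf^\mu[f]-\Varx_\mu[f])$. Writing $a_j := \Inf^\mu[g_j]/(1-\eta_j^2)$ and $b_j := \Inf_j^\eta[F]$, so that $a_j\geq 1$ (since $\Inf^\mu[g_j]\geq \Varx_\mu[g_j] = 1-\eta_j^2$) and $b_j\geq 0$, the desired inequality boils down to
\[
(C-C_1)\textstyle\sum_j a_j b_j \;+\; (C_1-C_2)\Inf^\eta[F] \;+\; (C_2-C)\Varx_\eta[F] \;\geq\; 0
\]
with $C = \max\{C_1,C_2\}$; this follows by a two-case analysis using $\Inf^\eta[F]\geq \Varx_\eta[F]$ and $\sum_j a_j b_j \geq \sum_j b_j = \Inf^\eta[F]$.

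The main obstacle I anticipate is the bookkeeping in the entropy decomposition of the second step: carefully tracking the $\sqrt{1-\eta_j^2}$ normalizations from the $\eta$- versus $\mu$-biased bases, and checking that after expanding the logarithm the ``cross'' contributions collapse precisely into an $F$-entropy plus weighted $g_j$-entropies with the $\Inf_j^\eta[F]/(1-\eta_j^2)$ coefficients. Once this identity and its influence/variance analogues are in hand, the hypothesized FEI$^+$ inequalities plug in in exactly the right way and the final step is a routine linear check.
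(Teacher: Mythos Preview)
Your proposal is correct and follows essentially the same approach as the paper: the decomposition identities you derive for $\bH^\mu[f^{\geq 1}]$, $\Inf^\mu[f]$, and $\Varx_\mu[f]$ are exactly the paper's Proposition~\ref{prop:ent-inf-var-compose} (with the double sum $\sum_{S\neq\emptyset}\widetilde{F}(S)^2\sum_{i\in S}(\cdot)$ rewritten as $\sum_j \Inf_j^\eta[F](\cdot)$ via $\Inf_j^\eta[F]=\sum_{S\ni j}\widetilde{F}(S)^2$), and your final two-case linear check is equivalent to the paper's shortcut of taking $C=\max\{C_1,C_2\}$ from the outset and applying both hypotheses with that single constant.
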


%% As far as we are aware the FEI conjecture for arbitrary product
%% distributions has not been studied before, and Conjecture
%% \ref{conj:prod-FEI++} is the first statement of its type.  We consider
%% this new connection between the FEI conjecture and the underlying
%% probability space to be the main conceptual contribution of our
%% work.\medskip

Theorem \ref{thm:compose} suggests an inductive approach towards
proving the FEI conjecture for read-once de Morgan formulas: since the
dictators $\pm x_i$ trivially satisfy uniform-distribution FEI$^+$
with factor $1$, it suffices to prove that both ${\sf AND}_2$ and
${\sf OR}_2$ satisfy $\mu$-biased FEI$^+$ with some constant
\emph{independent of} $\mu\in [-1,1]^2$. In Section
\ref{sec:product-FEI-bound} we prove that in fact \emph{every}
$F:\bits^k\to\bits$ satisfies $\mu$-biased FEI$^+$ with a factor depending
only on its arity $k$ and not the biases $\mu_1,\ldots,\mu_k$.

\begin{theorem}
\label{thm:product-FEI-bound}
  Every $F:\bits^k\to \bits$ satisfies $\mu$-biased FEI$^+$ with factor $C =
  2^{O(k)}$ for any product distribution $\mu = \la
  \mu_1,\ldots,\mu_k\ra$.
\end{theorem}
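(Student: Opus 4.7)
The plan is to proceed by induction on $k$, the arity of $F$. The base case $k=1$ is trivial: every Boolean function on one variable is either a constant or a dictator $\pm x_1$, and in both cases the LHS of $\mu$-biased FEI$^+$ vanishes (for dictators, $\wtf(\{1\})^2 = \sigma_1^2$ so $\log(\sigma_1^2/\wtf(\{1\})^2) = 0$). Hence the inequality holds with any $C \geq 0$.

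For the inductive step, I would condition $F$ on its last coordinate, writing $F(x) = \tfrac{1+x_k}{2}F_+(x_{-k}) + \tfrac{1-x_k}{2}F_-(x_{-k})$ with $F_\pm:\bits^{k-1}\to\bits$, and apply the standard biased-Fourier conditional identities
\[
\wtf(S) = \alpha \wt{F_+}(S) + \beta \wt{F_-}(S), \qquad \wtf(S\cup\{k\}) = \sqrt{\alpha\beta}\bigl(\wt{F_+}(S) - \wt{F_-}(S)\bigr),
\]
for $S \subseteq [k-1]$, where $\alpha = (1+\mu_k)/2$, $\beta = 1-\alpha$. A direct computation from these identities yields the Parseval-type relation $\wtf(S)^2 + \wtf(S\cup\{k\})^2 = \alpha\wt{F_+}(S)^2 + \beta\wt{F_-}(S)^2$, and the corresponding additive decomposition of $\Inf^\mu[F]-\Var_\mu[F]$ into $\alpha(\Inf^{\mu'}[F_+]-\Var_{\mu'}[F_+]) + \beta(\Inf^{\mu'}[F_-]-\Var_{\mu'}[F_-]) + \Inf_k^\mu[F]$. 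By induction $F_\pm$ both satisfy the desired inequality with factor $C_{k-1}$.

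The key technical step is to relate the LHS of FEI$^+$ for $F$ to the LHS for $F_\pm$. For each pair of ``sibling'' Fourier coefficients indexed by $S$ and $S\cup\{k\}$, I would use the concavity and the subadditivity ($L(U+V) \leq L(U)+L(V)$ when $L(0)=0$) of the function $L(x,\pi) = x\log(\pi/x)$, together with the identity $U+V = \alpha A^2+\beta B^2$ (writing $U=\wtf(S)^2$, $V=\wtf(S\cup\{k\})^2$, $A=\wt{F_+}(S)$, $B=\wt{F_-}(S)$), to establish a pairwise functional inequality of the form
\[
L(U,\pi_S) + L(V,\sigma_k^2\pi_S) \;\leq\; C'\bigl(\alpha L(A^2,\pi_S) + \beta L(B^2,\pi_S)\bigr) + C''\cdot |S\cup\{k\}|\cdot V
\]
for explicit constants $C' \leq 2$ and $C''=O(1)$ (using $\log\sigma_k^2 \leq 0$ to absorb the $\sigma_k^2$ factor for free, and binary-entropy bookkeeping to bound the ``excess'' $-U\log U - V\log V + \alpha A^2\log A^2 + \beta B^2\log B^2$ by a constant multiple of $V$). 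Summing over $S \subseteq [k-1]$ and invoking the inductive hypothesis alongside the RHS decomposition gives a recursion of the form $C_k \leq 2\,C_{k-1} + O(1)$, which unfolds to $C_k = 2^{O(k)}$.

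The main obstacle is pinning down the constants $C',C''$ in the pairwise functional inequality: to get $C_k = 2^{O(k)}$ rather than worse, one needs $C' \leq 2$, which requires a careful Jensen/concavity analysis of $L(x,\pi) = x\log(\pi/x)$ subject to the affine constraint $U+V = \alpha A^2 + \beta B^2$ that links the paired Fourier coefficients. Handling the ``boundary'' $|S|=1$ vs.\ $|S|=2$ transition — where the RHS coefficient $(|S|-1)$ first becomes nonzero — is the subtlest point, and is what forces the reduction to use the influence-on-coordinate-$k$ term $\Inf_k^\mu[F]$ introduced by the decomposition.
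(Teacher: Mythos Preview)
Your approach is genuinely different from the paper's, and the central technical step is left as an unproven assertion. The paper does \emph{not} induct on $k$. Instead it proves a single termwise estimate (Lemma~\ref{lem:one-term-in-ent}): for every nonempty $S$ and any $j\in S$,
\[
\wt F(S)^2\log\Bigl(\tfrac{\prod_{i\in S}\sigma_i^2}{\wt F(S)^2}\Bigr)\;\le\;\tfrac{2^{2k}}{\ln 2}\,\Var_\mu[D_{\phi_j^\mu}F],
\]
obtained by writing $\wt F(S)=\prod_{i\in S}\sigma_i\cdot\E[D_{x^S}F]$, using $t^2\log(1/t^2)\le t/\ln 2$, and then a pointwise argument that exploits the Boolean-valuedness of $F$ (namely that $D_jF\in\{-1,0,1\}$, so nonconstancy forces a jump of size $\ge 1/2$). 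Summing over the $2^k-1$ nonempty $S$ and using $\sum_i\Var[D_{\phi_i^\mu}F]\le 2(\Inf^\mu[F]-\Var_\mu[F])$ gives the theorem.

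Your proposal has two concrete problems. First, a bookkeeping error: the correct conditional decomposition is
\[
\Inf^\mu[F]-\Var_\mu[F]=\alpha\bigl(\Inf^{\mu'}[F_+]-\Var_{\mu'}[F_+]\bigr)+\beta\bigl(\Inf^{\mu'}[F_-]-\Var_{\mu'}[F_-]\bigr)+\bigl(\Inf_k^\mu[F]-\wt F(\{k\})^2\bigr),
\]
i.e.\ the last term is $\Var_\mu[D_{\phi_k^\mu}F]$, not $\Inf_k^\mu[F]$. Second, and more seriously, the $S=\emptyset$ case of your pairwise scheme is exactly the hard case and your sketch does not handle it. The entropy term $\wt F(\{k\})^2\log\bigl(\sigma_k^2/\wt F(\{k\})^2\bigr)$ has no counterpart in $\bH^{\mu'}[F_\pm^{\ge 1}]$ (the would-be partners $L(\E[F_\pm]^2,1)$ are \emph{not} in the inductive hypothesis), and it contributes nothing to the right-hand side since $|\{k\}|-1=0$. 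Bounding it therefore requires controlling a degree-$1$ entropy term by genuinely higher-degree Fourier mass, i.e.\ by $\Var_\mu[D_{\phi_k^\mu}F]=\sum_{T\ni k,\,|T|\ge 2}\wt F(T)^2$. That is precisely the content of Lemma~\ref{lem:one-term-in-ent}, and its proof uses Boolean-valuedness in an essential way; your concavity/subadditivity manipulations on $L(x,\pi)=x\log(\pi/x)$ alone cannot supply this, since they never distinguish Boolean $F$ from real-valued $F$ (for which the statement is false). In short, the ``main obstacle'' you flag is not merely a matter of pinning down constants: without an ingredient equivalent to Lemma~\ref{lem:one-term-in-ent} the induction does not close.
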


Together, Theorems \ref{thm:compose} and \ref{thm:product-FEI-bound}
imply:

\begin{theorem}
\label{thm:read-once}
  Let $f$ be computed by a read-once formula over the basis~$\calB$
  and~$\mu$ be any sequences of biases. Then $f$ satisfies
  $\mu$-biased
  FEI$^+$ with factor $C$, where~$C$ depends only on the arity of the
  gates in~$\calB$.
\end{theorem}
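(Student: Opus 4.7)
My plan is to prove Theorem~\ref{thm:read-once} by a straightforward structural induction on the read-once formula $f$, combining the composition theorem (Theorem~\ref{thm:compose}) with the fact that every small-arity function satisfies biased FEI$^+$ (Theorem~\ref{thm:product-FEI-bound}). Let $k_{\max}$ denote the maximum arity of any gate in $\calB$, and let $C = 2^{O(k_{\max})}$ be the constant furnished by Theorem~\ref{thm:product-FEI-bound} so that every Boolean function of arity at most $k_{\max}$ satisfies $\eta$-biased FEI$^+$ with factor $C$ for every bias vector $\eta$. I will show inductively that every read-once formula over $\calB$ satisfies $\mu$-biased FEI$^+$ with this same factor $C$, for every bias sequence $\mu$ on its input variables.

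For the base case, $f$ is a single literal $\pm x_i$ on one $\mu_i$-biased variable. A direct computation of the $\mu$-biased Fourier expansion shows $\bH^\mu[f^{\geq 1}] = 0$ (there is a unique nonzero coefficient above level zero, with squared magnitude exactly $1-\mu_i^2$) and likewise $\Inf^\mu[f] - \Var_\mu[f] = 0$, so the FEI$^+$ inequality holds trivially. For the inductive step, write $f = F(g_1,\ldots,g_k)$ where $F \in \calB$ is the root gate, of arity $k \le k_{\max}$, and each $g_i$ is a read-once subformula on a disjoint set of variables whose input bias vector is the restriction of $\mu$. By the inductive hypothesis each $g_i$ satisfies its appropriately-biased FEI$^+$ with factor $C$, while Theorem~\ref{thm:product-FEI-bound} gives that $F$ itself satisfies $\eta$-biased FEI$^+$ with factor $C$ for $\eta = \la \Ex_\mu[g_1],\ldots,\Ex_\mu[g_k]\ra$. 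Plugging these two facts into Theorem~\ref{thm:compose} yields that $f$ satisfies $\mu$-biased FEI$^+$ with factor $\max\{C,C\} = C$, closing the induction.

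The essential feature that makes this argument go through is that the conclusion of Theorem~\ref{thm:compose} is the \emph{maximum} of the two input factors, rather than a sum, product, or anything that would accumulate with formula depth; this is exactly what allows a single arity-dependent constant to propagate unchanged up through a read-once formula of arbitrary depth. I do not anticipate any real obstacle here: the difficult analytic content has already been shouldered by the two prerequisite theorems, and the only minor verification is the base case above, which is immediate from the two-coefficient $\mu_i$-biased Fourier expansion of a literal.
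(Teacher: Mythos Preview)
Your proposal is correct and follows essentially the same approach as the paper: structural induction on the read-once formula, with the base case handled by the explicit two-term $\mu$-biased expansion of a literal, and the inductive step by combining Theorem~\ref{thm:product-FEI-bound} (applied to the root gate under the induced bias $\eta$) with Theorem~\ref{thm:compose}. The paper packages the induction as a separate corollary before instantiating the constant, but the argument is the same.
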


Since uniform-distribution FEI$^+$ is a strengthening of the FEI
conjecture, Theorem \ref{thm:read-once} implies that the FEI
conjecture holds for read-once formulas over arbitrary gates of
bounded arity.  As mentioned above, prior to our work the FEI
conjecture was open even for the class of read-once DNFs, a small
subclass of read-once formulas over the de Morgan basis $\{{\sf
  AND}_2,{\sf OR}_2,{\sf NOT}\}$ of arity $2$.  Read-once formulas
over a rich basis $\calB$ are a natural generalization of read-once de
Morgan formulas, and have seen previous study in concrete complexity (see
\eg \cite{HNW93}). \bigskip

\noindent {\bf Improved lower bound on the FEI constant.}
%% Iterated composition is commonly used to achieve separations among
%% complexity measures. As a simple example, let $F$ be a Boolean
%% function over $k=O(1)$ variables, and $M_1$ and $M_2$ be two measures
%% such that $M_1[F] > M_2[F]$.  Suppose one is able to establish that
%% both $M_1$ and $M_2$ behave multiplicatively under the composition;
%% \ie $M_1[F(F(x^1),\ldots,F(x^k))] = M_1[F]^2$ and likewise for
%% $M_2$. This immediately yields a super-linear separation between $M_1$
%% and $M_2$ by considering $\{F^{\circ m}\}_{m\in\N}$, where $F^{\circ
%%   m}$ is the $m$-th iterated composition of $F$ with itself.  Though
%% simple, these constructions underlie many of the best-known
%% separations between complexity measures for Boolean functions
%% \cite{BdW02}.\medskip
Iterated disjoint composition is commonly used to achieve separations
between complexity measures for Boolean functions \cite{BdW02}, and
 represents a natural approach towards disproving the FEI
conjecture. For example, one may seek a function $F$ such that
iterated compositions of $F$ with itself achieves a super-constant
amplification of the ratio between $\bH[F]$ and $\Inf[F]$, or consider
variants such as iterating $F$ with a different combining function
$G$.  Theorem \ref{thm:read-once} rules out as potential
counterexamples all such constructions based on iterated
composition. \medskip

However, the tools we develop to prove Theorem~\ref{thm:read-once} also yield an explicit function $f$ achieving the
best-known separation between $\bH[f]$ and $\Inf[f]$ (\ie the constant $C$
in the statement of the FEI conjecture).  In Section
\ref{sec:lb-unif-FEI} we prove:

\begin{theorem}
\label{thm:lb-unif-FEI}
There exists an explicit family of functions $f_n\isafunc$ such that
\[ \lim_{n\to\infty}\frac{\bH[f_n]}{\Inf[f_n]} \geq 6.278. \]
\end{theorem}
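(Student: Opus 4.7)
The plan is to take $f_n$ to be the $n$-fold iterated disjoint self-composition of a carefully chosen balanced Boolean function $g$, and to reduce the theorem to producing a single $g$ whose uniform-distribution FEI$^+$ ratio $\bH[g]/(\Inf[g]-1)$ is at least $6.278$. Fix a balanced $g\colon \bits^k \to \bits$ (so $\E[g]=0$) and define $f_1 = g$ and $f_{n+1}(x^1,\ldots,x^k) = g(f_n(x^1),\ldots,f_n(x^k))$ on disjoint variable sets. Because $g$ is balanced, all biases arising in the iteration remain zero, and the uniform-distribution Fourier coefficients factor cleanly as
\[
\hatf_{n+1}(S) \;=\; \hatg(T)\prod_{i\in T} \hatf_n(S_i),
\qquad
S = \bigsqcup_i S_i,\ T = \{i : S_i \neq \emptyset\}.
\]

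From this factorization one obtains (by the same style of calculation that underlies the proof of Theorem~\ref{thm:compose}) the exact recurrences
\[
\Inf[f_{n+1}] \;=\; \Inf[g]\cdot \Inf[f_n], \qquad
\bH[f_{n+1}] \;=\; \bH[g] + \Inf[g]\cdot \bH[f_n],
\]
which solve to $\Inf[f_n]=\Inf[g]^n$ and $\bH[f_n]=\bH[g]\cdot(\Inf[g]^n-1)/(\Inf[g]-1)$. Dividing and letting $n\to\infty$,
\[
\lim_{n\to\infty} \frac{\bH[f_n]}{\Inf[f_n]} \;=\; \frac{\bH[g]}{\Inf[g] - 1},
\]
which is exactly $\bH[g^{\geq 1}]/(\Inf[g]-\Var[g])$ since a balanced $g$ has $\Var[g]=1$ and $\bH[g^{\geq 1}]=\bH[g]$. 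Thus the limiting FEI ratio of iterated self-compositions of $g$ coincides with the uniform FEI$^+$ ratio of $g$ itself, and the theorem reduces to exhibiting a single balanced $g$ with $\bH[g]/(\Inf[g]-1)\geq 6.278$.

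For calibration, ${\sf MAJ}_3$ has $\bH=2$ and $\Inf=3/2$, yielding a limiting ratio of $4$, which is insufficient; the previous lower bound of $4.615$ can be viewed in this framework as coming from a better small base function. The plan for locating a suitable $g$ is to search exhaustively over balanced functions on $k\in\{3,4,5\}$ variables and, for larger $k$, over structured candidates (recursive thresholds, address-type functions, tribes-like constructions with tuned block sizes), computing $\bH$ and $\Inf$ directly from the truth table.

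The principal obstacle is exactly this optimizer search. The space of balanced functions grows super-exponentially in $k$, so a brute force sweep quickly becomes infeasible and there is no obvious closed-form candidate that is a priori guaranteed to clear $6.278$; the task is to combine numerical enumeration at small $k$ with a well-chosen family at moderate $k$ to push the FEI$^+$ ratio as high as possible. Once such a $g$ is identified, the rest of the proof -- verifying the ratio on the truth table and then plugging into the limit computation above -- is a routine Fourier-analytic calculation.
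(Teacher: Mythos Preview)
Your reduction is correct and is exactly the paper's approach: iterate a balanced $g$ and use the recurrences $\Inf[f_{n+1}]=\Inf[g]\cdot\Inf[f_n]$ and $\bH[f_{n+1}]=\bH[g]+\Inf[g]\cdot\bH[f_n]$ (which the paper packages as the ``Amplification Lemma'', a special case of Proposition~\ref{prop:ent-inf-var-compose}) to conclude that the limiting ratio is $\bH[g]/(\Inf[g]-1)$.

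The gap is that you stop at the reduction and do not exhibit the witness~$g$; your proposal is a search plan, not a proof. The entire content of the theorem is the existence of a concrete balanced $g$ with $\bH[g]/(\Inf[g]-1)\geq 6.278$, and that must be written down and verified. The paper supplies it: the $6$-variable balanced function
\[
g \;=\; (\overline{x}_1\wedge x_2\wedge x_3)\vee(x_1\wedge\overline{x}_2\wedge x_4)\vee(x_1\wedge\overline{x}_2\wedge x_5\wedge x_6)\vee(x_1\wedge x_2\wedge x_3)\vee(x_1\wedge x_2\wedge x_4\wedge x_5),
\]
which has $\Inf[g]=1.625$ and $\bH[g]\geq 3.92434$, so $\bH[g]/(\Inf[g]-1)\geq 3.92434/0.625>6.278$. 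Note in particular that the optimizer lives at $k=6$, outside the brute-force range $k\in\{3,4,5\}$ you proposed; once you have this $g$, the verification is the routine truth-table computation you describe.
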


\noindent This improves on the previous lower bound of $C \ge 60/13 \approx
4.615$ \cite{OWZ11b}. \medskip

\noindent {\bf Previous work.} The first published progress on the
FEI conjecture was by Klivans {\it et al.} who proved the
conjecture for random $\poly(n)$-term DNF formulas
\cite{KLW10}. This was followed by the work of O'Donnell {\it et al.}
who proved the conjecture for the class of symmetric functions and
read-once decision trees \cite{OWZ11b}.\medskip

The FEI conjecture for product distributions was studied in the recent
work of Keller \etal \cite{KMS12}, where they consider the case of all
the biases being the same.  They introduce the following
generalization of the FEI conjecture to these measures, and show via a
reduction to the uniform distribution \cite{BKK+92} that it is
equivalent to the FEI conjecture:

\begin{conjecture}[Keller-Mossel-Schlank] There is a universal constant $C$ such that the following holds. Let $0 < p < 1$ and
  $f:\bn\to\bits$, where the domain of $f$ is endowed with the product
  distribution where $\Pr[x_i = -1] = p$ for all $i\in
  [n]$. Let $\{\wtf(S)\}_{S\sse[n]}$ be the Fourier coefficients of
    $f$ with respect to this distribution. Then
\[ \sumS \wtf(S)^2\log_2\left(\frac1{\wtf(S)^2}\right) \leq
C \cdot \frac{\log(1/p)}{1-p}\sumS |S| \cdot \wtf(S)^2. \]
\end{conjecture}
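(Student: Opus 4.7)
The plan is to derive the KMS inequality from the (uniform-measure) FEI conjecture via the Bourgain-Kahn-Kalai-Katznelson-Linial bias-to-uniform encoding. Given $f:\bn\to\bits$ and (by the symmetry $p\leftrightarrow 1-p$) a bias $p \le 1/2$ of the form $p = 2^{-m}$, I form a companion function $g$ on $nm$ uniform bits whose uniform law matches the $p$-biased law of $f$, apply FEI to $g$, and translate the resulting bound back to $f$. General $p$ is handled by a limiting argument at the end.

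The encoding is the tribe $h:\bits^m\to\bits$ with $h(y) = -1$ iff $y_1=\cdots=y_m=-1$; under uniform $y$ one has $\E[h]=1-2p$ and $\Var[h] = 4p(1-p)$. A direct Fourier calculation gives $\widehat{h^*}(T)^2 = p/(1-p)$ for every $\emptyset\ne T\sse [m]$, where $h^* := (h-\E[h])/\sqrt{\Var[h]}$. Consequently
\[
  \Inf[h^*] = \frac{\log_2(1/p)}{2(1-p)}, \qquad \bH[h^*] = \log_2\!\bigl((1-p)/p\bigr),
\]
and the value of $\Inf[h^*]$ is precisely where the KMS factor $\log(1/p)/(1-p)$ will come from.

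Define $g(y^{(1)},\ldots,y^{(n)}) = f(h(y^{(1)}),\ldots,h(y^{(n)}))$. By block independence and matching of marginals, the uniform law of $g$ equals the $p$-biased law of $f$. Substituting the $p$-biased Fourier expansion of $f$ and then expanding each $h^*(y^{(i)})$ in the uniform basis yields, for every $T = T_1\cup\cdots\cup T_n$ with $T_i$ the part of $T$ in block $i$,
\[
  \widehat{g}(T)^2 \;=\; \wtf(S_T)^2 \prod_{i \in S_T} \widehat{h^*}(T_i)^2, \qquad S_T := \{i : T_i \ne \emptyset\}.
\]
Using $\sum_{T_i \ne \emptyset} \widehat{h^*}(T_i)^2 = 1$, this collapses to the clean identities
\[
  \bH[g] \;=\; \bH^p[f] + \bH[h^*]\cdot\Inf^p[f], \qquad \Inf[g] \;=\; \Inf[h^*]\cdot\Inf^p[f].
\]
Uniform FEI applied to $g$ gives $\bH[g]\le C\cdot\Inf[g]$, so
\[
  \bH^p[f] \;\le\; \bigl(C\cdot\Inf[h^*] - \bH[h^*]\bigr)\cdot\Inf^p[f] \;\le\; \frac{C}{2}\cdot\frac{\log_2(1/p)}{1-p}\cdot\Inf^p[f],
\]
where the last step plugs in the explicit values above; this is KMS with constant $C/2$.

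The main obstacle is establishing the factorization identities for $\bH[g]$ and $\Inf[g]$: they rely on $h^*$ having no constant term and on careful bookkeeping so that each instance of $\bH[h^*]$ and $\Inf[h^*]$ factors cleanly out of the outer sum over $S$. Once these identities are in hand, the rest is direct computation, and the conceptual heart of the proof is the choice of encoding $h$ whose normalized influence matches the target KMS factor. For general $p\in(0,1)$, one approximates $p$ by dyadic $p_k = 2^{-m_k}$ and invokes continuity of both sides in the bias, or, more cleanly, replaces the tribe with a tailored Boolean function (e.g.\ an OR of ANDs) realizing $\E[h] = 1-2p$ exactly and satisfying analogous Fourier estimates.
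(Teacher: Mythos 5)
The first thing to flag is that the statement you are ``proving'' is a \emph{conjecture}, not a theorem: the paper states it purely as background from Keller--Mossel--Schlank and offers no proof, because (as the paper notes) it is \emph{equivalent} to the uniform-distribution FEI conjecture, which is open. What you have written is a conditional reduction --- ``uniform FEI $\Rightarrow$ KMS'' --- which is exactly the direction the paper attributes to \cite{KMS12} via the encoding of \cite{BKK+92}. As such a reduction, your core computation is correct and is the standard argument: for $p=2^{-m}$ the subcube indicator $h$ has $\widehat{h^*}(T)^2=p/(1-p)$ for all nonempty $T$, the block structure and the vanishing of $\widehat{h^*}(\emptyset)$ give the clean identities $\bH[g]=\bH^p[f]+\bH[h^*]\Inf^p[f]$ and $\Inf[g]=\Inf[h^*]\Inf^p[f]$ (this is precisely the mechanism of Lemma \ref{lem:ent-inf-tensor} and Proposition \ref{prop:ent-inf-var-compose} in the paper), and $\Inf[h^*]=\log_2(1/p)/(2(1-p))$ produces the KMS factor.

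The genuine gap is in your treatment of general $p$. The set $\{2^{-m}: m\in\N\}$ is \emph{not} dense in $(0,1/2]$, so ``approximate $p$ by dyadic $p_k=2^{-m_k}$ and invoke continuity'' cannot work: there is no sequence of powers of two converging to, say, $p=0.3$. Your fallback --- a tailored $h$ on $m$ bits with $\Pr[h=-1]=p$ --- is the right fix but only reaches dyadic rationals $p=j/2^m$ (which \emph{are} dense, so a continuity argument in $p$ for fixed $f$ then finishes), and it is not automatic: you must exhibit, for every dyadic $q\le 1/2$, a Boolean $h$ with $\Pr[h=-1]=q$ and $\Inf[h]/\Var[h]=O(\log(1/q)/(1-q))$. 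This is an edge-isoperimetry statement (tight for subcubes; for general $q$ one needs, e.g., lexicographic/compressed sets or an OR of disjoint subcubes realizing the binary expansion of $q$), and it is the one substantive lemma your writeup waves at rather than proves. Also note that the identities for $\bH[g]$ and $\Inf[g]$ use only $\widehat{h^*}(\emptyset)=0$ and block-disjointness, so they survive the change of $h$; the constant you lose is governed entirely by the isoperimetric bound on $\Inf[h^*]$.
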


Notice that in this conjecture, the constant on the right-hand side, $C \cdot \frac{\log(1/p)}{1-p}$, depends on~$p$. By way of contrast, in our Conjecture~\ref{conj:prod-FEI++} the right-hand side constant has no dependence on~$p$; instead, the dependence on the biases is built into the definition of spectral entropy.  We view our generalization of the FEI conjecture to
arbitrary product distributions (where the biases are not necessarily
identical) as a key contribution of this work, and point to
our composition theorem as evidence in favor of Conjecture
\ref{conj:prod-FEI++} being a good statement to work with.
\section{Preliminaries}
 \label{sec:biased-fourier}
\noindent{\bf Notation.}  We will be concerned with functions
$f:\bn_\mu\to\R$ where $\mu = \la \mu_1,\ldots,\mu_n\ra \in [0,1]^n$
is a sequence of biases. Here the notation $\bn_\mu$ means that we
think of $\bn$ as being endowed with the $\mu$-biased product
probability distribution in which $\Ex_\mu[x_i] = \mu_i$ for all $i\in
[n]$.  We write $\sigma_i^2$ to denote variance of the $i$-th
coordinate $\Var_\mu[x_i] = 1-\mu_i^2$, and $\varphi : \R\to\R$ as
shorthand for the function $t\mapsto t^2\log(1/t^2)$, adopting the
convention that $\varphi(0) = 0$. We will assume familiarity with the
basics of Fourier analysis with respect to product distributions over
$\bn$; a review is included in Appendix \ref{ap:biased-fourier}.
 
\begin{proposition}[FEI$^+$ implies FEI]
\label{prop:fei+-implies-fei}
Suppose $f$ satisfies uniform-distribution FEI$^+$ with factor
$C$. Then $f$ satisfies the FEI conjecture with factor
$\max\{C,1/\ln 2\}$. \end{proposition}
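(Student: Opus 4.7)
My plan is to specialize FEI$^+$ to the uniform distribution and then repair the single term that FEI$^+$ omits: the $S = \emptyset$ contribution to $\bH[f]$. Taking $\mu = \vec 0$ makes $\prod_{i\in S}(1-\mu_i^2) = 1$, so the hypothesis becomes
\[
\sum_{S \neq \emptyset} \widehat{f}(S)^2 \log_2\!\bigl(1/\widehat{f}(S)^2\bigr) \;\le\; C\cdot(\Inf[f]-\Var[f]),
\]
which differs from $\bH[f]$ by exactly the zeroth-order term $\varphi(\widehat{f}(\emptyset)) = p\log_2(1/p)$, where $p := \widehat{f}(\emptyset)^2 = 1 - \Var[f]$ (using that $f$ is $\pm 1$-valued).

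The sole analytic ingredient I need is the elementary bound $p\log_2(1/p) \le (1-p)/\ln 2$ for $p \in [0,1]$, equivalent to $-p\ln p \le 1 - p$; this falls out by noting $h(p) := 1 - p + p\ln p$ has $h(1)=0$ and $h'(p) = \ln p \le 0$ on $(0,1]$. Substituting $p = 1 - \Var[f]$ bounds the empty-set term by $\Var[f]/\ln 2$, and combining gives
\[
\bH[f] \;\le\; \frac{\Var[f]}{\ln 2} + C\cdot(\Inf[f]-\Var[f]) \;=\; C\cdot\Inf[f] + \Bigl(\tfrac{1}{\ln 2} - C\Bigr)\cdot\Var[f].
\]

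A two-case split finishes the argument. If $C \ge 1/\ln 2$ the last term is nonpositive, giving $\bH[f] \le C\cdot\Inf[f]$ directly. If $C < 1/\ln 2$, I would invoke the standard inequality $\Var[f] \le \Inf[f]$, immediate from $\sum_{S\neq\emptyset}\widehat{f}(S)^2 \le \sum_{S\neq\emptyset}|S|\widehat{f}(S)^2$, to absorb the positive excess into $\Inf[f]$, yielding $\bH[f] \le \Inf[f]/\ln 2$. Either way $\bH[f] \le \max\{C, 1/\ln 2\}\cdot\Inf[f]$. There is really no obstacle here: the entire argument is a one-line identity plus the calculus bound, and the case split simply witnesses that $\sup_{p\in[0,1]} p\log_2(1/p)/(1-p) = 1/\ln 2$ (attained as $p \to 1$), which is precisely where the ambient constant in the proposition originates.
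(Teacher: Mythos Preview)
Your proof is correct and follows essentially the same route as the paper: specialize FEI$^+$ to the uniform distribution, bound the missing $S=\emptyset$ term via $(1-\eps)\log_2\frac{1}{1-\eps}\le \eps/\ln 2$ (with $\eps=\Var[f]$), and then case-split on whether $C\gtrless 1/\ln 2$, invoking $\Var[f]\le\Inf[f]$ in the latter case. The only cosmetic difference is that the paper cites the Poincar\'e inequality by name where you verify it inline from the Fourier formulas.
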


\begin{proof}
  Let $\hatf(\emptyset)^2 = 1-\eps$, where $\eps = \Var[f]$ by
  Parseval's identity. By our assumption that $f$ satisfies
  uniform-distribution FEI$^+$ with factor $C$, we have
\begin{eqnarray*}
  \sumS\hatf(S)^2 \log\left(\frac{\prod_{i\in
        S}\sigma_i^2}{\hatf(S)^2}\right) & \leq & C\cdot (
  \Inf[f] -\Var[f]) + (1-\eps)\log
  \frac1{(1-\eps)} \\
  & \leq & C\cdot ( \Inf[f] -\Var[f]) + \frac{\eps}{\ln 2} \\
  &=  & C \cdot \Inf[f] + \left(\frac{1}{\ln 2}-C\right)\cdot \Var[f].
\end{eqnarray*}
If $C > 1/\ln 2$ then the RHS is at most $C\cdot \Inf[f]$ since
$(\frac1{\ln 2}-C)\cdot \Var[f]$ is negative. Otherwise we apply the
Poincar\'e inequality (Theorem \ref{thm:poincare}) to conclude that
the RHS is at most $C\cdot \Inf[f] + (\frac1{\ln 2}-C)\cdot
\Inf[f] = \frac1{\ln 2}\cdot \Inf[f]$.
\end{proof}

\section{Composition theorem for FEI$^+$}
\label{sec:composition-lemma}
We will be concerned with compositions of functions $f =
F(g_1(x^1),\ldots,g_k(x^k))$ where $g_1,\ldots,g_k$ are over disjoint
sets of variables each of size $\ell$. The domain of each $g_i$ is
endowed with a product distribution $\mu^i = \la
\mu^i_1,\ldots,\mu^i_\ell \ra$, which induces an overall product
distribution $\mu = \la \mu^1_1,\ldots,\mu^1_\ell , \ldots,
\mu^k_1,\ldots, \mu^k_\ell\ra$ over the domain of
$f:\bits^{k\ell}\to\bits$.  For notational clarity we will adopt the
equivalent view of $g_1,\ldots,g_k$ as functions over the same domain
$\bits^{k\ell}_\mu$ endowed with the same product distribution $\mu$,
with each $g_i$ depending only on $\ell$ out of $k\ell$
variables. \medskip

Our first lemma gives formulas for the spectral entropy and total influence
of the product of functions
$\Phi_1,\ldots,\Phi_k$ over disjoint sets of variables. The lemma
holds for real-valued functions $\Phi_i$; we
require this level of generality as we will not be applying the lemma
directly to the Boolean-valued functions $g_1,\ldots,g_k$ in the
composition $F(g_1(x^1),\ldots,g_k(x^k))$, but instead to their
normalized variants $\Phi(g_i) = (g_i-\E[g_i])/\Var[g_i]^{1/2}$.

\begin{lemma}
\label{lem:ent-inf-tensor}
Let $\Phi_1,\ldots,\Phi_k:\bits^{k\ell}_\mu\to\R$ where each $\Phi_i$
depends only on the $\ell$ coordinates in
$\{(i-1)\ell+1,\ldots,i\ell\}$.  Then
\[
 \bH^\mu[\Phi_1\cdots \Phi_k] = \sum_{i=1}^k
  \bH^\mu[\Phi_i]\prod_{j\neq i} \Ex_\mu[\Phi_j^2]
 \ \text{and} \ \
 \Inf^\mu[\Phi_1\cdots \Phi_k] =
 \sum_{i=1}^k \Inf^\mu[\Phi_i]
  \prod_{j\neq i} \Ex_\mu[\Phi_j^2]. \]
\end{lemma}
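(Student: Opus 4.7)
The plan is to leverage the tensorization of the $\mu$-biased Fourier transform for functions on disjoint variable blocks, and then separate both quantities into a sum indexed by $i$ via the standard trick of writing $\log$ of a product as a sum of logs, or $|S|$ as a sum of $|S_i|$.

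First I would write down the Fourier expansions. Since the $\Phi_i$'s depend on disjoint blocks $B_i = \{(i-1)\ell+1,\ldots,i\ell\}$, expanding each $\Phi_i = \sum_{S_i \subseteq B_i} \wt{\Phi_i}(S_i)\,\chi_{S_i}^\mu$ and multiplying shows that, for any $S \subseteq [k\ell]$, one has
\[
\wt{\Phi_1 \cdots \Phi_k}(S) \;=\; \prod_{i=1}^k \wt{\Phi_i}(S \cap B_i),
\]
and $\wt{\Phi_1\cdots\Phi_k}(S)=0$ unless $S$ decomposes as $S = S_1 \cup \cdots \cup S_k$ with each $S_i \subseteq B_i$. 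This uses only orthogonality of distinct $\mu$-biased characters and the fact that a character on disjoint blocks factors as a product of characters on each block.

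Next, for the entropy, I would plug this tensorization into the definition of $\bH^\mu$. The ratio inside the logarithm factors across blocks, since $\prod_{j\in S}\sigma_j^2 = \prod_i \prod_{j\in S_i}\sigma_j^2$ and $\wt{\Phi_1\cdots\Phi_k}(S)^2 = \prod_i \wt{\Phi_i}(S_i)^2$; the logarithm then splits into a sum of $k$ terms. Swapping the outer sum over $S=(S_1,\ldots,S_k)$ with the sum over $i$, the $i$-th contribution factors as
\[
\Bigl(\sum_{S_i \subseteq B_i} \wt{\Phi_i}(S_i)^2 \log\tfrac{\prod_{j\in S_i}\sigma_j^2}{\wt{\Phi_i}(S_i)^2}\Bigr)
\;\prod_{j\neq i}\Bigl(\sum_{S_j \subseteq B_j} \wt{\Phi_j}(S_j)^2\Bigr)
\;=\; \bH^\mu[\Phi_i]\,\prod_{j\neq i}\Ex_\mu[\Phi_j^2],
\]
where the last equality uses Parseval's identity. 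Summing over $i$ yields the claimed entropy formula.

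The influence identity is handled in exactly the same way, using $\Inf^\mu[f] = \sum_S |S|\,\wtf(S)^2$ and writing $|S| = |S_1|+\cdots+|S_k|$ to split the sum into $k$ pieces, each of which again factors via Parseval into $\Inf^\mu[\Phi_i]\prod_{j\neq i}\Ex_\mu[\Phi_j^2]$. No step here is really an obstacle; the only point requiring slight care is confirming that the $\mu$-biased Fourier transform genuinely tensorizes across disjoint blocks, i.e.\ that $\chi_S^\mu = \prod_i \chi_{S_i}^\mu$ when $S$ partitions along the block structure. This is immediate from the definition of the $\mu$-biased characters as products of one-variable characters, and everything else is just bookkeeping with logarithms, set sizes, and Parseval.
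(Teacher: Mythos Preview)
Your proof is correct and follows essentially the same approach as the paper: both exploit the tensorization of the $\mu$-biased Fourier transform across disjoint blocks, split the logarithm (respectively $|S|$) additively, and invoke Parseval. The only cosmetic difference is that the paper frames the computation as an induction on $k$, peeling off $\Phi_k$ and applying the inductive hypothesis to $h = \Phi_1\cdots\Phi_{k-1}$, whereas you expand the full $k$-fold product at once; the underlying argument is identical.
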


Due to space considerations we defer the proof of Lemma~\ref{lem:ent-inf-tensor} to Appendix \ref{ap:omitted}. We note that
this lemma recovers as a special case the folklore observation that
the FEI conjecture ``tensorizes'': for any $f$ if we define $f^{\oplus
  k}(x^1,\ldots,x^k) = f(x^1)\cdots f(x^k)$ then $\bH[f^{\oplus k}] =
k\cdot \bH[f]$ and $\Inf[f^{\oplus k}] = k\cdot \Inf[f]$. Therefore
$\bH[f] \leq C\cdot \Inf[f]$ if and only if $\bH[f^{\oplus k}] \leq
C\cdot \Inf[f^{\oplus k}]$.\medskip

Our next proposition relates the basic analytic measures -- spectral
entropy, total influence, and variance -- of a composition $f =
F(g_1(x^1),\ldots,g_k(x^k))$ to the corresponding quantities of the
combining function $F$ and base functions $g_1,\ldots,g_k$.  As
alluded to above, we accomplish this by considering $f$ as a linear
combination of the normalized functions $\Phi(g_i) =
(g_i-\E[g_i])/\Var[g_i]^{1/2}$ and applying Lemma
\ref{lem:ent-inf-tensor} to each term in the sum.  We mention that
this proposition is also the crux of our new lower bound of $C\geq
6.278$ on the constant of the FEI conjecture,
which we present in Section \ref{sec:lb-unif-FEI}.

\begin{proposition}
\label{prop:ent-inf-var-compose}
Let $F:\bits^k\to\R$, and $g_1,\ldots,g_k:\bits^{k\ell}_\mu\to\bits$ where
each $g_i$ depends only on the $\ell$ coordinates in
$\{(i-1)\ell+1,\ldots,i\ell\}$. Let $f(x) = F(g_1(x),\ldots,g_k(x))$ and
$\{\wt{F}(S)\}_{S\sse [k]}$ be the $\eta$-biased Fourier coefficients of
$F$ where $\eta =
  \la \Ex_\mu[g_1]), \ldots,\Ex_\mu[g_k] \ra$. Then
\begin{eqnarray}
   \bH^\mu[f^{\geq 1}] &=& \bH^\eta[F^{\geq 1}] +\sum_{S\neq\emptyset}
  \wt{F}(S)^2 \sum_{i\in S}\frac{\bH^\mu[g_i^{\geq
      1}]}{\Varx_\mu[g_i]}, \label{eq:ent-compose} \\
\Inf^\mu[f] &=& \sum_{S\neq\emptyset} \wt{F}(S)^2 \sum_{i\in S}
\frac{\Inf^\mu[g_i]}{\Varx_\mu[g_i]}, \quad and  \label{eq:inf-compose} \\
\Var_\mu[f] &=& \sum_{S\neq\emptyset}\wt{F}(S)^2 =
\Var_\eta[F]. \label{eq:var-compose}
\end{eqnarray}
\end{proposition}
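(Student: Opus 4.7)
The plan is to expand $F$ in its $\eta$-biased Fourier basis, substitute $y_i \mapsto g_i(x)$, and reduce everything to Lemma~\ref{lem:ent-inf-tensor}. Setting $\Phi_i := (g_i - \Ex_\mu[g_i])/\sqrt{\Varx_\mu[g_i]}$, so that $\Ex_\mu[\Phi_i] = 0$ and $\Ex_\mu[\Phi_i^2] = 1$, this expansion gives
\[
  f(x) \;=\; \wt{F}(\emptyset) + \sum_{S \neq \emptyset} \wt{F}(S)\prod_{i \in S} \Phi_i(x).
\]
Since the $\Phi_i$'s depend on disjoint blocks of variables and each has $\wt{\Phi_i}(\emptyset) = 0$, the $\mu$-biased Fourier supports of the pieces $\prod_{i \in S}\Phi_i$ are pairwise disjoint: for any $T \sse [k\ell]$ with $T \neq \emptyset$, writing $T_i := T \cap \mathrm{vars}(g_i)$ and $S(T) := \{i : T_i \neq \emptyset\}$, one has $\wt{f}(T) = \wt{F}(S(T))\prod_{i \in S(T)}\wt{\Phi_i}(T_i)$. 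This partitions the Fourier support of $f^{\geq 1}$ by $S \neq \emptyset$.

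The variance identity \eqref{eq:var-compose} then follows immediately from $\sum_{T_i \neq \emptyset}\wt{\Phi_i}(T_i)^2 = \Ex_\mu[\Phi_i^2] = 1$. For the influence identity \eqref{eq:inf-compose}, I apply Lemma~\ref{lem:ent-inf-tensor} to each $\prod_{i \in S}\Phi_i$: since every $\Ex_\mu[\Phi_j^2] = 1$, one obtains $\Inf^\mu[\prod_{i \in S}\Phi_i] = \sum_{i \in S}\Inf^\mu[\Phi_i]$, and the scaling $\wt{\Phi_i}(T_i) = \wt{g_i}(T_i)/\sqrt{\Varx_\mu[g_i]}$ (for $T_i \neq \emptyset$) gives $\Inf^\mu[\Phi_i] = \Inf^\mu[g_i]/\Varx_\mu[g_i]$. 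Summing $|T| = \sum_{i \in S(T)} |T_i|$ across the orthogonal decomposition yields the claim.

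The entropy identity \eqref{eq:ent-compose} is where the real work lies, because entropy interacts nontrivially with the rescaling by $\sqrt{\Varx_\mu[g_i]}$. Lemma~\ref{lem:ent-inf-tensor} applied to each $\prod_{i \in S}\Phi_i$ yields $\bH^\mu[\prod_{i \in S}\Phi_i] = \sum_{i \in S}\bH^\mu[\Phi_i]$, but the scaling relation instead gives
\[
  \bH^\mu[\Phi_i] \;=\; \frac{\bH^\mu[g_i^{\geq 1}]}{\Varx_\mu[g_i]} + \log \Varx_\mu[g_i],
\]
the additive correction arising from $\log(\wt{\Phi_i}(T_i)^2) = \log(\wt{g_i}(T_i)^2) - \log \Varx_\mu[g_i]$ together with $\sum_{T_i \neq \emptyset}\wt{g_i}(T_i)^2 = \Varx_\mu[g_i]$. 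Meanwhile, decomposing $\bH^\mu[f^{\geq 1}]$ along $S = S(T)$ and splitting the log produces a leftover $\sum_{S \neq \emptyset}\wt{F}(S)^2 \cdot (-\log \wt{F}(S)^2)$ that is not absorbed into the $\bH^\mu[\Phi_i]$ terms. The key bookkeeping step, and the main obstacle, is to recognize that since $g_i$ is $\pm 1$-valued we have $1 - \eta_i^2 = \Varx_\mu[g_i]$; hence the additive $\log \Varx_\mu[g_i]$ contributions together with the $-\log \wt{F}(S)^2$ terms reassemble into precisely
\[
  \sum_{S \neq \emptyset}\wt{F}(S)^2 \log\!\left(\frac{\prod_{i \in S}(1 - \eta_i^2)}{\wt{F}(S)^2}\right) \;=\; \bH^\eta[F^{\geq 1}],
\]
yielding \eqref{eq:ent-compose}.
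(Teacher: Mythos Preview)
Your proof is correct and follows essentially the same approach as the paper: expand $F$ in the $\eta$-biased basis, introduce the normalized functions $\Phi_i$, use disjointness of the Fourier supports across distinct $S$, and apply Lemma~\ref{lem:ent-inf-tensor} blockwise. The only cosmetic difference is that the paper absorbs the scalar $\wt{F}(S)$ into Lemma~\ref{lem:ent-inf-tensor} as a constant function (so the $-\log\wt{F}(S)^2$ term appears as $\varphi(\wt{F}(S))$), whereas you split the logarithm directly; the recombination of $\log\Varx_\mu[g_i] = \log(1-\eta_i^2)$ with $-\log\wt{F}(S)^2$ into $\bH^\eta[F^{\geq 1}]$ is identical in both.
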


\begin{proof}
  By the $\eta$-biased Fourier expansion of $F:\bits^k_\eta\to\R$
 and the definition of $\eta$
  we have
  \[ F(y_1,\ldots,y_k) = \sumS \wt{F}(S) \prod_{i\in S} \frac{y_i -
    \eta_i}{\sqrt{1-\eta_i^2}} = \sumS \wt{F}(S) \prod_{i\in S}
  \frac{y_i-\Ex_\mu[g_i]}{\Var_\mu[g_i]^{1/2}},\] so we may write
\[ F(g_1(x),\ldots,g_k(x)) = \sumS\wt{F}(S)
\prod_{i\in S}\Phi(g_i(x)), \  \text{where}\ \Phi(g_i(x)) =
\frac{g_i(x)-\Ex_\mu[g_i]}{\Var_\mu[g_i]^{1/2}}. \]
Note that $\Phi$ normalizes $g_i$ such that $\Ex_\mu[\Phi(g_i)] = 0$ and
$\Ex_\mu[\Phi(g_i)^2] = 1$.  First we claim that
\[ \bH^\mu[f^{\geq 1}] =\bH^\mu\bigg[\sum_{S \neq\emptyset}
\wt{F}(S)\prod_{i\in S}\Phi(g_i)\bigg] = \sum_{S \neq\emptyset}
\bH^\mu\Big[\wt{F}(S)\prod_{i\in S}\Phi(g_i)\Big].\] It suffices to show
that for any two distinct non-empty sets $S, T\sse [k]$, no monomial
$\phi_U^\mu$ occurs in the $\mu$-biased spectral support of both
$\wt{F}(S) \prod_{i\in S}\Phi(g_i)$ and $\wt{F}(T) \prod_{i\in
  T}\Phi(g_i)$.  To see this recall that $\Phi(g_i)$ is balanced with
respect to $\mu$ (\ie $\Ex_\mu[\Phi(g_i)] =
\Ex_\mu[\Phi(g_i)\phi^\mu_\emptyset] = 0$), and so every monomial
$\phi_U^\mu$ in the support of $\wt{F}(S)\prod_{i\in S}\Phi(g_i)$ is of
the form $\prod_{i\in S}\phi^\mu_{U_i}$ where $U_i$ is a non-empty
subset of the relevant variables of $g_i$ (\ie
$\{(i-1)\ell+1,\ldots,i\ell\}$); likewise for monomials in the support
of $\wt{F}(T)\prod_{i\in T}\Phi(g_i)$.  In other words the non-empty
subsets of $[k]$ induce a partition of the $\mu$-biased Fourier support
of $f$, where $\phi^\mu_U$ is mapped to $\emptyset\neq S\sse [k]$ if and
only if $U$ contains a relevant variable of $g_i$ for every $i\in S$
and none of the relevant variables of $g_j$ for any $j\notin
S$. \medskip

With this identity in hand we have
\begin{eqnarray*} \bH^\mu[f^{\geq 1}] &=& \sum_{S\neq\emptyset}
  \bH^\mu\Big[\wt{F}(S)
  \prod_{i\in S}\Phi(g_i)\Big] \\
&=& \sum_{S\neq\emptyset} \varphi(\wt{F}(S)) +
  \wt{F}(S)^2 \sum_{i\in S}\bH^\mu[\Phi(g_i)]. \\
  &=&\sum_{S\neq\emptyset} \varphi(\wt{F}(S)) + \wt{F}(S)^2 \sum_{i\in
    S}\left(\frac{\bH^\mu[g_i-\Ex_\mu[g_i]]}{\Varx_\mu[g_i]} +
  \varphi\left(\frac1{\Varx_\mu[g_i]^{1/2}}\right)\Varx_\mu[g_i]\right)
\\
  &=& \bH^\eta[F^{\geq 1}] +\sum_{S\neq\emptyset}
  \wt{F}(S)^2 \sum_{i\in S}\frac{\bH^\mu[g_i^{\geq
      1}]}{\Varx_\mu[g_i]},
\end{eqnarray*}
where the second and third equalities are two applications of Lemma
\ref{lem:ent-inf-tensor} (for the second equality we view $\wt{F}(S)$
as a constant function with $\bH^\mu[\wt{F}(S)] = \varphi(\wt{F}(S))$).
By the same reasoning, we also have
\begin{eqnarray*}
 \Inf^\mu[f] \ =\ \sum_{S\neq\emptyset} \Inf^\mu\Big[\wt{F}(S)\prod_{i\in
  S}\Phi(g_i(x^i))\Big]  & =& \sum_{S\neq\emptyset}  \wt{F}(S)^2
 \sum_{i\in S}\Inf^\mu[\Phi(g_i)]
 \\
&=& \sum_{S\neq\emptyset} \wt{F}(S)^2 \sum_{i\in S}
\frac{\Inf^\mu[g_i]}{\Varx_\mu[g_i]}.
\end{eqnarray*}
Here the second equality is by Lemma \ref{lem:ent-inf-tensor}, again
viewing $\wt{F}(S)$ as a constant function with $\Inf^\mu[\wt{F}(S)] =
0$, and the third equality uses the fact that $\Inf^\mu[\alpha f] =
\alpha^2\cdot  \Inf^\mu[f]$ and $\Inf^\mu[g_i-\Ex_\mu[g_i]] =
\Inf^\mu[g_i]$.  Finally we see that
\[ \Var_\mu[f] = \sum_{S\neq\emptyset}\Var_\mu\Big[\wt{F}(S)\prod_{i\in
  S}\Phi(g_i)\Big] = \sum_{S\neq\emptyset}\wt{F}(S)^2 \prod_{i\in
  S}\Var_\mu[\Phi(g_i)] = \sum_{S\neq\emptyset}\wt{F}(S)^2,\] where the
last quantity is $\Var_\eta[F]$.  Here the second equality uses the fact
that the functions $\Phi(g_i)$ are on disjoint sets of variables (and
therefore statistically independent when viewed as random variables),
and the third equality holds since $\Var_\mu[\Phi(g_i)] =
\E[\Phi(g_i)^2] - \E[\Phi(g_i)]^2 = 1$.
\end{proof}

We are now ready to prove our main theorem:

\begin{reptheorem}{thm:compose}
Let $F:\bits^k\to\R$, and $g_1,\ldots,g_k:\bits^{k\ell}_\mu\to\bits$ where
each $g_i$ depends only on the $\ell$ coordinates in
$\{(i-1)\ell+1,\ldots,i\ell\}$. Let $f(x) = F(g_1(x),\ldots,g_k(x))$ and
suppose $C > 0$ satisfies
\begin{enumerate}
\item $\bH^\mu[g_i^{\geq 1}] \leq C\cdot  (\Inf^\mu[g_i] - \Var_\mu[g_i])$
  for all
  $i\in [k]$.
\item $\bH^\eta[F^{\geq 1}] \leq C\cdot (\Inf^\eta[F] - \Var_\eta[F])$, where
  $\eta =\la \Ex_\mu[g_1], \ldots, \Ex_\mu[g_k]\ra$.
\end{enumerate}
Then $\bH^\mu[f^{\geq 1}] \leq C\cdot (\Inf^\mu[f]-\Var_\mu[f])$.
\end{reptheorem}

\begin{proof}
By our first assumption each $g_i$ satisfies $\Inf^\mu[g_i]\geq
\frac1{C}\bH^\mu[g^{\geq 1}] + \Varx_\mu[g_i]$, and so combining this with
equation (\ref{eq:inf-compose}) of Proposition
\ref{prop:ent-inf-var-compose} we have
\begin{eqnarray}
\Inf^\mu[f]\ = \  \sum_{S\neq\emptyset} \wt{F}(S)^2 \sum_{i\in S}
\frac{\Inf^\mu[g_i]}{\Varx_\mu[g_i]}
&\geq& \sum_{S\neq\emptyset} \wt{F}(S)^2 \sum_{i\in
  S}\left(\frac{\bH^\mu[g_i^{\geq 1}]}{C\Varx_\mu[g_i]} + 1 \right)
\nonumber \\
&=&   \Inf^\eta[F] + \frac1{C}\sum_{S\neq\emptyset}
\wt{F}(S)^2 \sum_{i\in S}\frac{\bH^\mu[g_i^{\geq
    1}]}{\Varx_\mu[g_i]} \quad\quad \label{eq:inf-induct}
\end{eqnarray}
This along with equations (\ref{eq:ent-compose}) and
(\ref{eq:var-compose}) of Proposition \ref{prop:ent-inf-var-compose}
completes the proof:
\begin{eqnarray*}
\bH^\mu[f^{\geq 1}] &=& \bH^\eta[F^{\geq 1}] +\sum_{S\neq\emptyset}
  \wt{F}(S)^2 \sum_{i\in S}\frac{\bH^\mu[g_i^{\geq
      1}]}{\Varx_\mu[g_i]} \\
&\leq& C\cdot (\Inf^\eta[F] - \Var_\eta[F]) + \sum_{S\neq\emptyset}
  \wt{F}(S)^2 \sum_{i\in S}\frac{\bH^\mu[g_i^{\geq
      1}]}{\Varx_\mu[g_i]} \\
&\leq & C\cdot (\Inf^\mu[f] - \Var_\eta[F]) \ = \ C\cdot (\Inf^\mu[f] -
\Var_\mu[f]).
\end{eqnarray*}
Here the first equality is by (\ref{eq:ent-compose}), the first
inequality by our second assumption, the second inequality by
(\ref{eq:inf-induct}), and finally the last identity by
(\ref{eq:var-compose}).
\end{proof}

\section{Distribution-independent bound for FEI$^+$}
\label{sec:product-FEI-bound}

In this section we prove that $\mu$-biased FEI$^+$ holds for all
Boolean functions $F:\bits^k_\mu\to\bits$ with factor $C$ independent
of the biases $\mu_1,\ldots,\mu_k$ of $\mu$. When $\mu = \la 0,\ldots
0\ra$ is the uniform distribution it is well-known that the FEI
conjecture holds with factor $C = O(\log k)$, and a bound of $C \leq
2^k$ is trivial since $\Inf[F]$ is always an integer multiple of
$2^{-k}$ and $\bH[F] \leq 1$; neither proofs carry through to the
setting of product distributions. We remark that even verifying the
seemingly simple claim ``there exists a universal constant $C$ such
that $\bH^\mu[\MAJ_3] \leq C\cdot (\Inf^\mu[\MAJ_3]-\Var_\mu[\MAJ_3])$
for all product distributions $\mu \in [0,1]^3$'', where $\MAJ_3$ the
majority function over 3 variables, turns out to be technically
cumbersome.  \medskip

%% , and a priori it is possible that there exists an $F$
%% over $O(1)$ variables and a sequence of product distributions
%% $p^1,p^2,p^3,\ldots$ such that $\lim_{m\to\infty}
%% \bH^{p^m}[F]/\Inf^{p^m}[F] = \omega(1)$.  Our main result in this
%% section, Theorem \ref{thm:product-FEI-bound}, rules this out. \medskip

The high-level strategy is to bound each of the $2^k-1$ terms of
$\bH^\mu[F^{\geq 1}]$ separately; due to space considerations we defer
the proof the main lemma to Appendix \ref{ap:omitted}.

\begin{lemma}
\label{lem:one-term-in-ent}
Let $F:\bits^k_\mu\to\bits$.  Let $S\sse [k]$, $S\neq\emptyset$, and
suppose $\wt{F}(S)\neq 0$.
For any $j\in S$ we have
\[ \wt{F}(S)^2 \log\left(\frac{\prod_{i\in
      S}\sigma_i^2}{\wt{F}(S)^2}\right) \leq \frac{2^{2k}}{
\ln 2} \cdot \Varx_\mu[D_{\phi^\mu_j}F]. \]
%\quad \left (RHS = \frac{2^{2k} \sigma_j^2 }{
%\ln(2)}\cdot\Var_\mu[D_jF]\right).  \]
\end{lemma}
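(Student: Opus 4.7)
The proof I have in mind proceeds by cases on $|S|$.

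For $|S|=1$, so $S=\{j\}$: Since $F$ is $\pm 1$-valued, the biased derivative $D_{\phi^\mu_j}F(x)=\sigma_j(F(x^{(j\to 1)})-F(x^{(j\to -1)}))/2$ takes values pointwise in $\{-\sigma_j,0,\sigma_j\}$. Parameterizing by $a=\Pr_\mu[D_{\phi^\mu_j}F=\sigma_j]$ and $b=\Pr_\mu[D_{\phi^\mu_j}F=-\sigma_j]$, one computes $\wt F(\{j\})=\sigma_j(a-b)$ and $\Varx_\mu[D_{\phi^\mu_j}F]=\sigma_j^2[(a+b)-(a-b)^2]$. Setting $s=a-b$ and $t=a+b$ with $|s|\leq t\leq 1$, the desired inequality reduces to $s^2\ln(1/s^2)\leq C(t-s^2)$; the supremum of $s^2\ln(1/s^2)/(|s|(1-|s|))$ over $s\in(0,1)$ equals $2$ by an easy L'H\^opital calculation as $s\to 1$, so the inequality holds with $C=2$, and a fortiori with $C=2^{2k}$.

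For $|S|\geq 2$: The biased Fourier expansion $D_{\phi^\mu_j}F=\sum_{T\ni j}\wt F(T)\phi^\mu_{T\setminus\{j\}}$ together with orthonormality of $\{\phi^\mu_U\}_U$ gives $\Varx_\mu[D_{\phi^\mu_j}F]=\sum_{T\ni j,\,T\neq\{j\}}\wt F(T)^2$, so $\wt F(S)^2\leq\Varx_\mu[D_{\phi^\mu_j}F]$ immediately. What remains is to bound $\log_2(\prod_{i\in S}\sigma_i^2/\wt F(S)^2)$. I would pass to uniform-distribution Fourier coefficients via the change-of-basis formula
\[
\wt F(T)=\prod_{i\in T}\sigma_i\cdot\sum_{U\subseteq[k]\setminus T}\Big(\prod_{i\in U}\mu_i\Big)\hat F(T\cup U),
\]
combined with the Boolean integrality $\hat F(R)\in 2^{-k}\mathbb{Z}$, which gives $|\hat F(R)|\geq 2^{-k}$ for every nonzero uniform Fourier coefficient. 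The assumption $\wt F(S)\neq 0$ forces $\hat F(R)\neq 0$ for some $R\supseteq S$; choosing such $R$ of maximal size makes the sum defining $\wt F(R)$ collapse to the single term $\prod_{i\in R}\sigma_i\cdot\hat F(R)$ with no cancellations, yielding $\wt F(R)^2\geq\prod_{i\in R}\sigma_i^2\cdot 2^{-2k}$; since $R\ni j$ and $|R|\geq 2$, this term contributes to $\Varx_\mu[D_{\phi^\mu_j}F]$.

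The main obstacle is the factor $\prod_{i\in R\setminus S}\sigma_i^2$, which can be close to zero when $R\supsetneq S$ and some $\sigma_i$ for $i\in R\setminus S$ is small, so the maximal-$R$ lower bound on the variance does not directly control $\prod_{i\in S}\sigma_i^2$. My plan is to split further on whether $\wt F(S)^2\geq\prod_{i\in S}\sigma_i^2\cdot 2^{-2k}$: in the first regime $\log_2(\prod\sigma_i^2/\wt F(S)^2)\leq 2k\leq 2^{2k}/\ln 2$ and $\wt F(S)^2\leq\Varx_\mu[D_{\phi^\mu_j}F]$ already closes the proof; in the second regime $\wt F(S)^2$ is so small that the monotonicity of $t\mapsto t\log_2(\prod\sigma_i^2/t)$ on $(0,\prod_{i\in S}\sigma_i^2/e)$ bounds the left-hand side by $2k\prod_{i\in S}\sigma_i^2\cdot 2^{-2k}$, which can then be matched against the full variance sum $\sum_{T\ni j, T\neq\{j\}}\wt F(T)^2$ (expressed in terms of $\hat F(R)$ and bias products $\prod\mu_i^2$ via the change-of-basis formula), with the at most $2^k$ combinatorial cross-terms and the bias factors $|\mu_i|\leq 1$ absorbed into the $2^{2k}$ constant.
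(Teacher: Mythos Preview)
Your argument for $|S|=1$ is correct, and for $|S|\geq 2$ the ``regime~1'' case ($\wt F(S)^2\geq 2^{-2k}\prod_{i\in S}\sigma_i^2$) is also fine: there $\log_2(\prod\sigma_i^2/\wt F(S)^2)\leq 2k$ and $\wt F(S)^2\leq\Varx_\mu[D_{\phi^\mu_j}F]$ combine to give the bound. The genuine gap is in regime~2. Your monotonicity step bounds the left-hand side by $2k\cdot 2^{-2k}\prod_{i\in S}\sigma_i^2$, a quantity that no longer depends on $\wt F(S)$ at all. To close the argument you would then need
\[
\Varx_\mu[D_{\phi^\mu_j}F]\ \geq\ c_k\prod_{i\in S}\sigma_i^2
\]
for some $c_k>0$ depending only on $k$, given only that $\wt F(S)\neq 0$. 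This is \emph{false}. Take $k=3$, $S=\{1,2\}$, $j=1$, $\mu_1=\mu_2=0$, $\mu_3=-1+\eps$, and $F(x)=x_1x_2$ when $x_3=1$, $F(x)=x_1$ when $x_3=-1$ (so $\hat F(\{1,2\})=\hat F(\{1,2,3\})=1/2$). Then $\wt F(\{1,2\})=\eps/2\neq 0$ while $\Varx_\mu[D_{\phi^\mu_1}F]=\eps^2/4+\sigma_3^2/2=O(\eps)$ and $\prod_{i\in S}\sigma_i^2=1$. So your intermediate inequality fails for small $\eps$, and the vague appeal to ``cross-terms and bias factors $|\mu_i|\leq 1$ absorbed into $2^{2k}$'' cannot rescue it: the obstruction is small $\sigma_i$'s, not large $\mu_i$'s, and you are seeking a \emph{lower} bound on a sum of squares, where counting terms and bounding factors by $1$ goes the wrong way.

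The paper's proof avoids this by never decoupling $\wt F(S)^2$ from the logarithm. It uses the identity $\wt F(S)=\prod_{i\in S}\sigma_i\cdot\Ex_\mu[D_{x^S}F]$ to rewrite the left-hand side as $\prod_{i\in S}\sigma_i^2\cdot\varphi(\Ex_\mu[D_{x^S}F])$, applies $t^2\log(1/t^2)\leq |t|/\ln 2$ and $|\Ex_\mu[D_{x^S}F]|\leq\Prx_\mu[D_{x^S}F\neq 0]$, and then argues \emph{pointwise}: for each $y\in\bits^{[k]\setminus S}$ with $D_{x^S}F(y)\neq 0$, the restriction $(D_{x_j}F)|_y$ is non-constant on $\bits^{S\setminus\{j\}}$, hence contributes at least $2^{-2k}\prod_{i\in S\setminus\{j\}}\sigma_i^2$ to $\Varx_\mu[D_{x_j}F]$. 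The key point is that this tracks $\Prx_\mu[D_{x^S}F\neq 0]$ on both sides, whereas your regime-2 bound throws that information away.
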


\begin{reptheorem}{thm:product-FEI-bound}
Let $F:\bits^k_\mu\to\bits$. Then
\[ \bH^\mu[F^{\geq 1}] \leq 2^{O(k)} \cdot
(\Inf^\mu[F]-\Var_\mu[F]). \]
\end{reptheorem}

\begin{proof}
The claim can be equivalently stated as $\bH^\mu[F^{\geq 1}] \leq
2^{O(k)} \sumi
\Var_\mu[D_{\phi^\mu_i}F]$, since
\[ \sumi\Var[D_{\phi^\mu_i}F] = \sum_{|S| \geq 2}|S|\cdot \wt{F}(S)^2
\leq 2\sum_{|S|\geq 2} (|S|-1)\cdot \wt{F}(S)^2 = 2\cdot (\Inf^\mu[F]
-\Var_\mu[F]).\] By Lemma \ref{lem:one-term-in-ent}, for every $S\neq
\emptyset$ that contributes $\varphi(\wt{F}(S))$ to $\bH^\mu[F^{\geq
  1}]$ we have $\varphi(\wt{F}(S)) \leq 2^{O(k)} \Var_\mu[D_{\phi_j^\mu}F]$,
where $j$ is any element of $S$.  Summing over all $2^k-1$ non-empty
subsets $S$ of $[k]$ completes the proof.
\end{proof}

\subsection{FEI$^+$ for read-once formulas}
\label{sec:fei-read-once}

Finally, we combine our two main results so far, the composition
theorem (Theorem \ref{thm:compose}) and the distribution-independent
universal bound (Theorem \ref{thm:product-FEI-bound}), to prove
Conjecture \ref{conj:prod-FEI++} for read-once formulas with arbitrary
gates of bounded arity.

\begin{definition}
  Let $\calB$ be a set of Boolean functions. We say that a Boolean
  function $f$ is a formula over the basis $\calB$ if $f$ is
  computable a formula with gates belonging to $\calB$.  We say that
  $f$ is a read-once formula over $\calB$ if every variable appears at
  most once in the formula for $f$.
\end{definition}

\begin{corollary}\label{cor:read-once-induction}
  Let $C > 0$ and $\calB$ be a set of Boolean functions, and suppose
  $\bH^\mu[F] \leq C\cdot (\Inf^\mu[F]-\Var_\mu[F])$ for all $F\in\calB$ and
  product distributions $\mu$. Let $\calC$ be the class of read-once
  formulas over the basis $\calB$.  Then $\bH^\mu[f] \leq C\cdot
  (\Inf^\mu[f]-\Var_\mu[f])$ for all $f\in\calC$ and product distributions $\mu$.
\end{corollary}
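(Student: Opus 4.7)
The plan is a structural induction on the read-once formula computing $f\in\calC$. Such a formula is either a leaf---a literal $\pm x_i$ or a constant $\pm 1$---or an application $F(\psi_1,\ldots,\psi_m)$ of a gate $F\in\calB$ to read-once subformulas $\psi_1,\ldots,\psi_m$ on pairwise disjoint variable sets.

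For the base case I verify that a single literal satisfies FEI$^+$ trivially. For $f(x)=x_i$ under bias $\mu_i$, the $\mu$-biased expansion is $f = \mu_i + \sigma_i\,\phi_i^\mu$, so the only nonzero nonconstant Fourier coefficient is $\wt{f}(\{i\}) = \sigma_i$. Consequently $\bH^\mu[f^{\geq 1}] = \sigma_i^2\log(\sigma_i^2/\sigma_i^2) = 0$ and $\Inf^\mu[f]-\Var_\mu[f] = \sigma_i^2 - \sigma_i^2 = 0$, so the desired inequality holds with equality for any $C\geq 0$. Constants are handled in exactly the same way since all three quantities vanish.

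For the inductive step, write $f = F(g_1,\ldots,g_k)$ with $F\in\calB$ and each $g_j$ a read-once subformula over $\calB$ on a variable set $V_j$, where $V_1,\ldots,V_k$ are pairwise disjoint. The inductive hypothesis applied to each $g_j$ (with the induced product distribution on $V_j$) yields that $g_j$ satisfies $\mu$-biased FEI$^+$ with factor $C$. The hypothesis on $\calB$ applied to $F$ at the particular bias vector $\eta = \la \Ex_\mu[g_1],\ldots,\Ex_\mu[g_k]\ra$ yields that $F$ satisfies $\eta$-biased FEI$^+$ with factor $C$. Both premises of Theorem \ref{thm:compose} now hold with factor $\max\{C,C\}=C$, and invoking that theorem gives $\bH^\mu[f^{\geq 1}]\leq C\cdot(\Inf^\mu[f]-\Var_\mu[f])$, as desired.

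The only minor technical point to address is that Theorem \ref{thm:compose} is formally stated for base functions of a common arity $\ell$, whereas subformulas of a read-once formula typically have varying arities. Inspecting the proof in Section \ref{sec:composition-lemma} reveals that the common-arity hypothesis is purely notational---Lemma \ref{lem:ent-inf-tensor} and Proposition \ref{prop:ent-inf-var-compose} only use that the variable sets of the $g_j$'s are pairwise disjoint---so the theorem applies as stated. Alternatively, one can pad each $g_j$ with dummy variables to equalize arities, which leaves every quantity in sight unchanged. This is essentially the only obstacle, and it is cosmetic; all the genuine content has already been encapsulated in Theorem \ref{thm:compose}.
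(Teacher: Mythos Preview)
Your proof is correct and follows essentially the same approach as the paper: a structural induction on the read-once formula, with the base case checked by direct computation on literals and the inductive step handled by invoking Theorem~\ref{thm:compose} with the induction hypothesis supplying the premise for the $g_i$'s and the assumption on $\calB$ supplying the premise for $F$ at biases $\eta = \la \Ex_\mu[g_1],\ldots,\Ex_\mu[g_k]\ra$. Your additional remarks on constants and on the cosmetic common-arity hypothesis are not in the paper's proof but are reasonable clarifications.
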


\begin{proof}
  We proceed by structural induction on the formula computing $f$. The
  base case holds since the $\mu$-biased Fourier expansion of the
  dictator $x_1$ and anti-dictator $-x_i$ is $\pm (\mu_1 +
  \sigma_1\phi_1^\mu(x))$ and so $\bH^\mu[f^{\geq 1}] =
  \wt{f}(\{1\})^2\log(\sigma_1^2/\wt{f}(\{1\})^2) = \sigma_1^2
  \log(\sigma_1^2/\sigma_1^2) = 0$.\medskip

  For the inductive step, suppose $f = F(g_1,\ldots,g_k)$, where
  $F\in\calB$ and $g_1,\ldots,g_k$ are read-once formulas over $\calB$
  over disjoint sets of variables. Let $\mu$ be any product distribution
  over the domain of $f$.  By our induction hypothesis we have
  $\bH^\mu[g_i^{\geq 1}] \leq C\cdot (\Inf^\mu[g_i]-\Var_\mu[g_i])$ for all
  $i\in [k]$, satisfying the first requirement of Theorem
  \ref{thm:compose}. Next, by our assumption on $F\in\calB$,
  we have $\bH^\eta[F^{\geq 1}] \leq C\cdot (\Inf^\eta[F]-\Var_\eta[F])$ for
  all product distributions $\eta$, and in particular, $\eta =
    \la\Ex_\mu[g_1],\ldots,\Ex_\mu[g_k]\ra$,
  satisfying the second requirement of Theorem
  \ref{thm:compose}.  Therefore, by Theorem
\ref{thm:compose} we conclude that $\bH^\mu[f]\leq C\cdot
(\Inf^\mu[f]-\Var_\mu[f])$.
\end{proof}

By Theorem \ref{thm:product-FEI-bound}, for any set $\calB$ of Boolean
functions with maximum arity $k$ and product distribution $\mu$, every
$F\in\calB$ satisfies $\bH^\mu[F]\leq 2^{O(k)}\cdot
(\Inf^\mu[F]-\Var_\mu[q])$.  Combining this with Corollary
\ref{cor:read-once-induction} yields the following:

\begin{reptheorem}{thm:read-once}
Let $\calB$ be a set of Boolean functions with maximum arity $k$, and
$\calC$ be the class of read-once formulas over the basis
$\calB$. Then $\bH^\mu[f] \leq 2^{O(k)}\cdot (\Inf^\mu[f]-\Var_\mu[f])$ for
all $f\in\calC$ and product distributions $\mu$.
\end{reptheorem}

\section{Lower bound on the constant of the FEI conjecture}
\label{sec:lb-unif-FEI}

The tools we develop in this paper also yield an explicit function $f$
achieving the best-known ratio between $\bH[f]$ and $\Inf[f]$ (\ie a
lower bound on the constant $C$ in the FEI conjecture).  We will use
the following special case of Proposition
\ref{prop:ent-inf-var-compose} on the behavior of spectral entropy and
total influence under composition:

%  We
% will need a special case of Proposition
% \ref{prop:ent-inf-var-compose}, which we state here as a separate lemma:
% \begin{lemma}
%   Let $F:\bits^k\to\bits$ and $g:\bn\to\bits$ be balanced
%   Boolean functions. Define $f(x^1,\ldots,x^k) = F(g(x^1),\ldots,g(x^k))$
%   Then
% \begin{eqnarray*}
% \bH[f] &=& \bH[F] + \bH[g]\cdot \Inf[F] \\
% \Inf[f] &=& \Inf[g] \cdot \Inf[F].
% \end{eqnarray*}
% \end{lemma}

\begin{lemma}[Amplification lemma]
 \label{lem:amplify}
Let $F:\bits^k\to\bits$ and $g:\bits^\ell\to\bits$ be balanced Boolean
functions. Let $f_0 = g$, and for all $m\geq 1$, define
$f_m = F(f_{m-1}(x^1),\ldots,f_{m-1}(x^k))$. Then
\begin{eqnarray*}
\bH[f_m] &=& \bH[g]\cdot
\Inf[F]^m + \bH[F] \cdot \frac{\Inf[F]^m-1}{\Inf[F]-1}  \\
\Inf[f_m] &=& \Inf[g]\cdot \Inf[F]^m.
\end{eqnarray*}
In particular, if $F= g$ we have
\[ \frac{\bH[f_m]}{\Inf[f_m]} = \frac{\bH[F]}{\Inf[F]} +
\frac{\bH[F]}{\Inf[F](\Inf[F]-1)} -
\frac{\bH[F]}{\Inf[F]^{m+1}(\Inf[F]-1)}. \]
\end{lemma}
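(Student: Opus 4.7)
The plan is to derive both identities by a direct application of Proposition~\ref{prop:ent-inf-var-compose} at each level of the iterated composition, followed by a short geometric-series calculation.  First I would observe that since $g$ is Boolean and balanced, $\Var[g] = 1$ and $\E[g] = 0$, so when we view $f_m = F(h,\ldots,h)$ with $h := f_{m-1}$, the outer ``bias vector'' $\eta = \la \E[h],\ldots,\E[h]\ra$ appearing in Proposition~\ref{prop:ent-inf-var-compose} is the all-zeros vector (the uniform distribution on $\bits^k$), so the $\eta$-biased coefficients $\wt{F}(S)$ coincide with the standard $\hat{F}(S)$, and $\Inf^\eta[F] = \Inf[F]$.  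Balancedness also propagates up the tree: $\E[f_m] = \E[F] = 0$, so $\hat{f_m}(\emptyset) = 0$ and $\bH[f_m^{\geq 1}] = \bH[f_m]$ at every level.  Substituting into equations~(\ref{eq:ent-compose}) and~(\ref{eq:inf-compose}) and using $\sum_{S \neq \emptyset} |S|\,\hat{F}(S)^2 = \Inf[F]$ collapses the double sums and yields the two first-order recurrences
\begin{align*}
\Inf[f_m] &= \Inf[F]\cdot \Inf[f_{m-1}], \\
\bH[f_m]  &= \bH[F] + \Inf[F]\cdot \bH[f_{m-1}],
\end{align*}
with base cases $\Inf[f_0] = \Inf[g]$ and $\bH[f_0] = \bH[g]$.

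Next I would solve these recurrences.  The influence recurrence is purely geometric and gives $\Inf[f_m] = \Inf[g]\cdot \Inf[F]^m$ by immediate induction on $m$.  The entropy recurrence is linear first-order inhomogeneous; unrolling it produces $\bH[f_m] = \Inf[F]^m\, \bH[g] + \bH[F]\sum_{j=0}^{m-1} \Inf[F]^j$, and summing the finite geometric series yields the claimed closed form $\bH[g]\cdot \Inf[F]^m + \bH[F]\cdot \frac{\Inf[F]^m - 1}{\Inf[F]-1}$.  For the specialization $F = g$, I would then divide the two closed-form expressions and apply the partial-fraction identity
\[
\frac{\Inf[F]^m-1}{\Inf[F]^{m+1}(\Inf[F]-1)} \;=\; \frac{1}{\Inf[F](\Inf[F]-1)} - \frac{1}{\Inf[F]^{m+1}(\Inf[F]-1)}
\]
to recover the stated three-term decomposition of $\bH[f_m]/\Inf[f_m]$.

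There is no substantive obstacle here; the lemma is a clean unrolling of Proposition~\ref{prop:ent-inf-var-compose}.  The only point requiring mild care is the balancedness bookkeeping---verifying that the induced ``outer'' distribution really is uniform so that $\eta$-biased Fourier data for $F$ coincides with its uniform-distribution Fourier data, and that $\bH[\,\cdot\,^{\geq 1}]$ agrees with $\bH[\,\cdot\,]$ at every level---but both points follow immediately from $\E[g] = 0$ together with the independence of the $k$ disjoint blocks of variables fed into $F$.
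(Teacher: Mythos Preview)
Your proposal is correct and follows exactly the paper's approach: invoke Proposition~\ref{prop:ent-inf-var-compose} (after noting that balancedness propagates so that $\eta$ is uniform and $\Var[f_{m-1}]=1$) to obtain the recurrences $\bH[f_m] = \Inf[F]\cdot\bH[f_{m-1}] + \bH[F]$ and $\Inf[f_m] = \Inf[F]\cdot\Inf[f_{m-1}]$, then solve them.  Your write-up is in fact more careful than the paper's, which states the recurrences and simply says ``solving them yields the claim.''
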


\begin{proof}
  Since the composition of a balanced function with another remains
  balanced, we have the recurrence relations $\bH[f_m] =
  \bH[f_{m-1}]\cdot \Inf[F] + \bH[F] $ and $\bH[f_m] =
  \bH[f_{m-1}]\cdot \Inf[F] + \bH[F] $ as special cases of
  Proposition \ref{prop:ent-inf-var-compose}.
%% \begin{eqnarray*}
%%   \\
%% \Inf[f_m] &=& \Inf[f_{m-1}] \cdot \Inf[F].
%% \end{eqnarray*}
Solving them yields the claim.
\end{proof}

\begin{reptheorem}{thm:lb-unif-FEI}
There exists an infinite family of functions $f_m: \bits^{6^m}\to\bits$
such that
$\lim_{m\to\infty} \bH[f_m]/\Inf[f_m] \geq 6.278944$.
%% \[ \lim_{m\to\infty} \frac{\bH[f_m]}{\Inf[f_m]} \geq 6.278944.\]
\end{reptheorem}

\begin{proof}
Let \[ g  =
( \overline{x}_1 \wedge x_2 \wedge x_3) \vee
( x_1 \wedge \overline{x}_2 \wedge x_4) \vee
( x_1 \wedge \overline{x}_2 \wedge x_5 \wedge x_6) \vee
( x_1 \wedge x_2 \wedge x_3) \vee
( x_1 \wedge x_2\wedge x_4\wedge x_5).
\]
It can be checked that $g$ is a balanced function with $\bH[F] \geq
3.92434$ and $\Inf[F] = 1.625$.  Applying Lemma \ref{lem:amplify} with
$F = g$, we get
\[  \lim_{m\to\infty}\frac{\bH[f_m]}{\Inf[f_m]} \ge
\frac{3.92434}{1.625} + \frac{3.92434}{1.625\times 0.625} =
6.278944.\]
\end{proof}

\bibliographystyle{alpha}
\bibliography{odonnell-bib}

\newcommand{\etalchar}[1]{$^{#1}$}
\begin{thebibliography}{BKK{\etalchar{+}}92}

\bibitem[BdW02]{BdW02}
Harry Buhrman and Ronald de~Wolf.
\newblock Complexity measures and decision tree complexity: a survey.
\newblock {\em Theoretical Computer Science}, 288(1):21--43, 2002.

\bibitem[BKK{\etalchar{+}}92]{BKK+92}
Jean Bourgain, Jeff Kahn, Gil Kalai, Yitzhak Katznelson, and Nathan Linial.
\newblock The influence of variables in product spaces.
\newblock {\em Israel Journal of Mathematics}, 77(1):55--64, 1992.

\bibitem[DETT10]{DETT10}
Anindya De, Omid Etesami, Luca Trevisan, and Madhur Tulsiani.
\newblock Improved pseudorandom generators for depth 2 circuits.
\newblock In {\em Proceedings of the 14th Annual International Workshop on
  Randomized Techniques in Computation}, pages 504--517, 2010.

\bibitem[FK96]{FK96b}
Ehud Friedgut and Gil Kalai.
\newblock Every monotone graph property has a sharp threshold.
\newblock {\em Proceedings of the American Mathematical Society},
  124(10):2993--3002, 1996.

\bibitem[Fri98]{Fri98}
Ehud Friedgut.
\newblock Boolean functions with low average sensitivity depend on few
  coordinates.
\newblock {\em Combinatorica}, 18(1):27--36, 1998.

\bibitem[GKK08a]{GKK08}
Parikshit Gopalan, Adam Kalai, and Adam Klivans.
\newblock Agnostically learning decision trees.
\newblock In {\em Proceedings of the 40th Annual ACM Symposium on Theory of
  Computing}, pages 527--536, 2008.

\bibitem[GKK08b]{GKK08a}
Parikshit Gopalan, Adam Kalai, and Adam Klivans.
\newblock A query algorithm for agnostically learning {DNF}?
\newblock In {\em Proceedings of the 21st Annual Conference on Learning
  Theory}, pages 515--516, 2008.

\bibitem[HNW93]{HNW93}
Rafi Heiman, Ilan Newman, and Avi Wigderson.
\newblock On read-once threshold formulae and their randomized decision in tree
  complexity.
\newblock {\em Theor. Comput. Sci.}, 107(1):63--76, 1993.

\bibitem[Kal07]{Kal07}
Gil Kalai.
\newblock The entropy/influence conjecture.
\newblock Posted on Terence Tao's \emph{What's new} blog,
  http://terrytao.wordpress.com/2007/08/16/gil-kalai-the-entropyinfluence-conjecture/,
  2007.

\bibitem[KKL88]{KKL88}
Jeff Kahn, Gil Kalai, and Nathan Linial.
\newblock The influence of variables on {B}oolean functions.
\newblock In {\em Proceedings of the 29th Annual IEEE Symposium on Foundations
  of Computer Science}, pages 68--80, 1988.

\bibitem[KLW10]{KLW10}
Adam Klivans, Homin Lee, and Andrew Wan.
\newblock Mansour's {C}onjecture is true for random {DNF} formulas.
\newblock In {\em Proceedings of the 23rd Annual Conference on Learning
  Theory}, pages 368--380, 2010.

\bibitem[KMS12]{KMS12}
Nathan Keller, Elchanan Mossel, and Tomer Schlank.
\newblock A note on the entropy/influence conjecture.
\newblock {\em Discrete Mathematics}, 312(22):3364--3372, 2012.

\bibitem[Man94]{Man94}
Yishay Mansour.
\newblock Learning {B}oolean functions via the {F}ourier {T}ransform.
\newblock In Vwani Roychowdhury, Kai-Yeung Siu, and Alon Orlitsky, editors,
  {\em Theoretical Advances in Neural Computation and Learning}, chapter~11,
  pages 391--424. Kluwer Academic Publishers, 1994.

\bibitem[OS08]{OS08b}
Ryan O'Donnell and Rocco Servedio.
\newblock Learning monotone decision trees in polynomial time.
\newblock {\em SIAM Journal on Computing}, 37(3):827--844, 2008.

\bibitem[OWZ11]{OWZ11b}
Ryan O'Donnell, John Wright, and Yuan Zhou.
\newblock The {F}ourier {E}ntropy-{I}nfluence {C}onjecture for certain classes
  of boolean functions.
\newblock In {\em Proceedings of the 38th Annual International Colloquium on
  Automata, Languages and Programming}, pages 330--341, 2011.

\bibitem[Ser04]{Ser04a}
Rocco Servedio.
\newblock {On learning monotone {DNF} under product distributions}.
\newblock {\em Information and Computation}, 193(1):57--74, 2004.

\end{thebibliography}

\appendix

\section{Biased Fourier Analysis}
\label{ap:biased-fourier}

\begin{theorem}[Fourier expansion]
\label{thm:fourier-expansion}
Let $\mu=\la\mu_1,\ldots,\mu_n\ra$ be a sequence of biases.  The
$\mu$-biased Fourier expansion of $f:\bn\to\R$ is
\[ f(x) = \sumS \wtf(S)\phi^\mu_S(x), \]
where
\[ \phi^\mu_S(x) = \prod_{i\in S} \frac{x_i-\mu_i}{\sigma_i} \quad
\text{and} \quad \wtf(S) = \Ex_\mu[f(\bx)\phi^\mu_S(\bx)],\]
and $\sigma_i^2 = \Var_\mu[x_i] = 1-\mu_i^2$.
\end{theorem}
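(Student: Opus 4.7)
The plan is to prove this by establishing that $\{\phi^\mu_S\}_{S \subseteq [n]}$ is an orthonormal basis for the space of functions $f : \{-1,1\}^n \to \mathbb{R}$ equipped with the inner product $\langle f, g \rangle_\mu = \Ex_\mu[f(\bx) g(\bx)]$. Once orthonormality and completeness are in place, both the expansion $f = \sum_S \wtf(S) \phi^\mu_S$ and the formula $\wtf(S) = \Ex_\mu[f(\bx)\phi^\mu_S(\bx)]$ fall out immediately from standard linear algebra.

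First I would handle the one-coordinate case. For a single variable $\bx_i$ distributed on $\{-1,1\}$ with $\Ex_\mu[\bx_i] = \mu_i$, a direct computation gives $\Ex_\mu[\phi^\mu_{\{i\}}] = (\Ex_\mu[\bx_i] - \mu_i)/\sigma_i = 0$ and $\Ex_\mu[(\phi^\mu_{\{i\}})^2] = \Var_\mu[\bx_i]/\sigma_i^2 = 1$. Together with $\phi^\mu_\emptyset \equiv 1$, which satisfies $\Ex_\mu[\phi^\mu_\emptyset \cdot \phi^\mu_{\{i\}}] = \Ex_\mu[\phi^\mu_{\{i\}}] = 0$ and $\Ex_\mu[(\phi^\mu_\emptyset)^2] = 1$, this gives an orthonormal basis for the two-dimensional space of functions on $\{-1,1\}$ under the $\mu_i$-biased measure.

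Next I would lift this to $n$ coordinates using independence of the product distribution. For any $S, T \subseteq [n]$, the function $\phi^\mu_S \cdot \phi^\mu_T$ factors as $\prod_{i \in [n]} \psi_i(\bx_i)$ where $\psi_i \in \{1, \phi^\mu_{\{i\}}, (\phi^\mu_{\{i\}})^2\}$ depending on whether $i$ lies in $S \triangle T$, only one of $S, T$, or in $S \cap T$. By independence, $\Ex_\mu[\phi^\mu_S \phi^\mu_T] = \prod_i \Ex_\mu[\psi_i(\bx_i)]$, and using the one-coordinate orthonormality each factor equals either $1$ (when $i \notin S \triangle T$) or $0$ (when $i \in S \triangle T$). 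So the product vanishes unless $S = T$, in which case it equals $1$. This establishes orthonormality of the $2^n$ functions $\{\phi^\mu_S\}_{S \subseteq [n]}$.

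Finally, since the space of real-valued functions on $\{-1,1\}^n$ has dimension $2^n$, any orthonormal set of size $2^n$ is a basis. Writing $f = \sum_S c_S \phi^\mu_S$ and taking inner products against $\phi^\mu_T$ yields $c_T = \langle f, \phi^\mu_T \rangle_\mu = \Ex_\mu[f(\bx)\phi^\mu_T(\bx)]$, which is exactly the claimed formula for $\wtf(T)$. No step here presents a genuine obstacle; the main subtlety is just bookkeeping in the one-coordinate calculation to verify $\Ex_\mu[(\phi^\mu_{\{i\}})^2] = 1$ using $\sigma_i^2 = 1 - \mu_i^2$, and then invoking independence cleanly rather than expanding $\phi^\mu_S\phi^\mu_T$ coordinate by coordinate in the multi-variable step.
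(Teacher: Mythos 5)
Your proof is correct and is the standard argument: orthonormality of the $\phi^\mu_S$ via the one-coordinate computation plus independence, then a dimension count and inner products to extract the coefficients. The paper states this as background in the appendix without proof, and your argument is exactly the expected one (modulo the standing assumption that each $\mu_i\in(-1,1)$ so that $\sigma_i>0$ and the $\mu$-biased inner product is nondegenerate).
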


The
$\mu$-biased spectral support of $f$ is the collection $\calS\sse
2^{[n]}$ of subsets $S\sse [n]$ such that $\wtf(S) \neq 0$.  We write
$f^{\geq k}$ to denote $\sum_{|S|\geq k}
\wt{f}(S)\phi^\mu_S(x)$, the projection of $f$ onto its monomials of
degree at least $k$.

\begin{theorem}[Parseval's identity]
  Let $f:\bn_\mu\to\R$. Then $\sumS\wt{f}(S)^2 = \Ex_\mu[f(\bx)^2]$. In
  particular, if the range of $f$ is $\bits$ then $\sumS\wt{f}(S)^2 =
  1$.
\end{theorem}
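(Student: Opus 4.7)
The plan is to derive Parseval's identity directly from the orthonormality of the $\mu$-biased character system $\{\phi_S^\mu\}_{S \sse [n]}$, which is the workhorse of biased Fourier analysis. First I would verify the single-coordinate case: for each $i$, the random variable $\frac{\bx_i - \mu_i}{\sigma_i}$ has $\Ex_\mu$-mean $0$ (by the definition of $\mu_i$) and $\Ex_\mu$-second moment $\Var_\mu[\bx_i]/\sigma_i^2 = 1$ (by the definition of $\sigma_i^2$). These are two one-line computations since $\bx_i \in \bits$ makes $\bx_i^2 = 1$ deterministically.

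Next I would leverage the product structure of $\mu$ to bootstrap these one-coordinate identities into orthonormality of the full character system. Since the coordinates $\bx_1,\ldots,\bx_n$ are independent under the product distribution, for any $S, T \sse [n]$ we get
\[ \Ex_\mu[\phi^\mu_S(\bx)\,\phi^\mu_T(\bx)] = \prod_{i \in S \cap T} \Ex_\mu\!\left[\left(\tfrac{\bx_i-\mu_i}{\sigma_i}\right)^{\!2}\right] \cdot \prod_{i \in S \triangle T} \Ex_\mu\!\left[\tfrac{\bx_i-\mu_i}{\sigma_i}\right], \]
which by the previous step equals $1$ if $S = T$ and $0$ otherwise (the second product is empty iff $S = T$, and otherwise contains a zero factor). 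This establishes $\Ex_\mu[\phi^\mu_S \phi^\mu_T] = \mathbf{1}[S = T]$.

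With orthonormality in hand, the identity follows in one line. Substituting the Fourier expansion of Theorem~\ref{thm:fourier-expansion}, we compute
\[ \Ex_\mu[f(\bx)^2] = \Ex_\mu\bigg[\sum_{S,T} \wtf(S)\wtf(T)\,\phi^\mu_S(\bx)\phi^\mu_T(\bx)\bigg] = \sum_{S,T} \wtf(S)\wtf(T) \cdot \mathbf{1}[S=T] = \sumS \wtf(S)^2, \]
where the exchange of sum and expectation is justified by the sums being finite. For the ``in particular'' clause, if the range of $f$ is $\bits$ then $f(\bx)^2 \equiv 1$ pointwise, so $\Ex_\mu[f(\bx)^2] = 1$.

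I do not expect any genuine obstacle: the only nontrivial content is recognizing that the one-coordinate functions $\frac{x_i-\mu_i}{\sigma_i}$ are designed precisely so that orthonormality across the tensor product drops out of independence. The remaining work is bookkeeping, and the entire proof runs to under half a page.
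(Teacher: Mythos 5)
Your proof is correct and is the standard argument: orthonormality of the $\mu$-biased characters $\{\phi_S^\mu\}$ follows from coordinate independence plus the two one-coordinate moment computations, and Parseval drops out by expanding $f^2$. The paper states this as a background fact in its review of biased Fourier analysis and gives no proof, so there is nothing to compare against; your write-up fills that gap correctly (modulo the implicit standing assumption, shared with the paper, that each $\sigma_i > 0$ so the characters are defined).
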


\begin{definition}[Influence]
  Let $f:\bn_\mu\to\R$. The influence of variable $i\in [n]$ on $f$ is
  $\Inf_i^\mu[f] = \Ex_\rho[\Var_{\mu_i}[f_\rho]]$, where $\rho$ is a
  $\mu$-biased random restriction to the coordinates in $[n]\backslash
  \{i\}$.  The total influence of $f$, denoted $\Inf^\mu[f]$, is
  $\sumi\Inf_i^\mu[f]$.
\end{definition}

We recall a few basic Fourier formulas.  The expectation of $f$ is
given by $\Ex_\mu[f] = \wt{f}(\emptyset)$ and its variance $
\Var_\mu[f] = \sum_{S\neq\emptyset}\wtf(S)^2$. For each $i\in [n]$,
$\Inf^\mu_i[f] = \sum_{S\ni i} \wtf(S)^2$ and so $\Inf^\mu[f] = \sumS
|S|\cdot \wtf(S)^2$.  We omit the sub- and superscripts when $\mu =
\la 0,\ldots,0\ra$ is the uniform distribution. Comparing the Fourier
formulas for variance and total influence yields the Poincar\'e
inequality for functions $f:\bn_\mu\to\R$:

\begin{theorem}[Poincar\'e inequality]
\label{thm:poincare}
Let $f:\bn_\mu\to\R$. Then $\Inf^\mu[f] \leq \Var_\mu[f]$.
\end{theorem}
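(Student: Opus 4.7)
The plan is to derive the inequality directly from the two Fourier formulas given immediately above the theorem statement. Parseval's identity applied to $f - \Ex_\mu[f]$ yields $\Var_\mu[f] = \sum_{S\neq\emptyset}\wtf(S)^2$, while summing the identity $\Inf^\mu_i[f] = \sum_{S\ni i}\wtf(S)^2$ over $i\in [n]$ and swapping the order of summation gives $\Inf^\mu[f] = \sumS |S|\cdot \wtf(S)^2$. Both formulas are instances of orthonormality of the $\mu$-biased characters $\{\phi^\mu_S\}_{S\sse[n]}$ in $L^2(\bn_\mu)$, as furnished by Theorem \ref{thm:fourier-expansion}.

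With both quantities written in this diagonal form, I would then compare them term by term over the Fourier support. The $S=\emptyset$ coefficient contributes $0$ to each: the variance expression excludes it, and $|\emptyset|\cdot \wtf(\emptyset)^2 = 0$. For every $S\neq\emptyset$ the multiplier $|S|$ is at least $1$, so $|S|\cdot \wtf(S)^2 \geq \wtf(S)^2$. Summing these pointwise inequalities over all $S\sse [n]$ produces
\[
\Inf^\mu[f] \;=\; \sumS |S|\cdot \wtf(S)^2 \;\geq\; \sum_{S\neq\emptyset}\wtf(S)^2 \;=\; \Var_\mu[f],
\]
which is the Poincar\'e-type inequality in the direction actually invoked in the proof of Proposition \ref{prop:fei+-implies-fei}, where one replaces $\Var[f]$ by the (no smaller) quantity $\Inf[f]$ inside a term with non-negative coefficient $\tfrac{1}{\ln 2}-C$.

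There is no real obstacle here: the argument is a one-line consequence of orthonormality and needs no hypothesis on $f$ beyond membership in $L^2(\bn_\mu)$ — in particular, $f$ need not be Boolean-valued. The inequality is sharp, attained for instance by the degree-$1$ characters $f = \phi^\mu_i$, for which $\Var_\mu[f] = \Inf^\mu[f] = 1$. Beyond degree $1$ the gap scales as a factor of $|S|$ per unit of Fourier mass, which is exactly what makes total influence a strictly stronger measure of non-constancy than variance for spread-out functions, and is what underlies the broader program (FEI, KKL, etc.) of quantifying this gap.
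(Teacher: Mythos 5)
Your proof is correct and is exactly the paper's (implicit) one-line argument: compare $\Inf^\mu[f]=\sumS |S|\cdot\wtf(S)^2$ with $\Var_\mu[f]=\sum_{S\neq\emptyset}\wtf(S)^2$ term by term using $|S|\geq 1$ for $S\neq\emptyset$. You are also right to flag the direction: the inequality as printed in Theorem \ref{thm:poincare} is reversed (and false as stated), and the correct statement $\Var_\mu[f]\leq\Inf^\mu[f]$ --- which is what you prove --- is the one actually invoked in Proposition \ref{prop:fei+-implies-fei}.
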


Recall that the $i$-th discrete derivative operator for $f\isafunc$ is
defined to be
\[ D_{x_i}(x) = {\lfrac 1 2}\left(f(x^{i\leftarrow 1}) - f(x^{i\leftarrow
    -1})\right),\]
and for $S\sse [n]$ we write $D_{x^S}f$ to denote $\circ_{i\in S}
D_{x_i}f$.
\begin{definition}[Discrete derivative]
The $i$-th discrete derivative operator $D_{\phi^\mu_i}$ with respect
to the $\mu$-biased product distribution on $\bn$ is defined by
 $D_{\phi^\mu_i}f(x) = \sigma_iD_{x_i}f(x)$.
\end {definition}
With respect to the $\mu$-biased Fourier expansion of $f:\bn_\mu\to\R$ the
operator $D_{\phi^\mu_i}$ satisfies
\[ D_{\phi^\mu_i}f = \sum_{S\ni i}\wtf(S) \phi^\mu_S, \]
and so for any $S\sse [n]$ we have   $\wtf(S) =
  \E[\circ_{i\in S}D_{\phi^\mu_i}f] = \prod_{i\in
    S}\sigma_i\Ex_\mu[(D_{x^S}f)]$.

\section{Omitted Proofs}
 \label{ap:omitted}

\begin{replemma}{lem:ent-inf-tensor}
Let $\Phi_1,\ldots,\Phi_k:\bits^{k\ell}_\mu\to\R$ where each $\Phi_i$
depends only on the $\ell$ coordinates in
$\{(i-1)\ell+1,\ldots,i\ell\}$.  Then
\[
 \bH^\mu[\Phi_1\cdots \Phi_k] = \sum_{i=1}^k
  \bH^\mu[\Phi_i]\prod_{j\neq i} \Ex_\mu[\Phi_j^2]
 \ \text{and} \ \
 \Inf^\mu[\Phi_1\cdots \Phi_k] =
 \sum_{i=1}^k \Inf^\mu[\Phi_i]
  \prod_{j\neq i} \Ex_\mu[\Phi_j^2]. \]
\end{replemma}

\begin{proof}
  We prove both formulas by induction on $k$, noting that the bases
  cases are trivially true. For the inductive step, we define $h(x) =
  \prod_{i\in [k-1]} \Phi_i(x)$ and see that
\begin{eqnarray*}
 \bH^\mu[h\cdot \Phi_k] &=&\mathop{\sum_{S\sse [(k-1)\ell]}}_{T\sse
   \{(k-1)\ell+1,\ldots
   k\ell\}} \wt{h}(S)^2 \wt{\Phi_k}(T)^2
 \log\left(\frac{\prod_{i\in S\cup
       T}\sigma_i^2}{\wt{h}(S)^2\wt{\Phi_k}(T)^2}\right) \\
&=&  \sum_{S,T} \wt{h}(S)^2 \wt{\Phi_k}(T)^2\left[
  \log\left(\frac{\prod_{i\in S}\sigma_i^2}{\wt{h}(S)^2}\right) +
\log \left(\frac{\prod_{i\in T}\sigma_i^2}{\wt{\Phi_k}(T)^2}\right) \right] \\
&=& \Ex_\mu[h^2]\cdot  \bH^\mu[\Phi_k] + \Ex_\mu[\Phi_k^2] \cdot \bH^\mu[h]  \\
&=& \prod_{i\in [k-1]} \Ex_\mu[\Phi_i^2] \cdot \bH^\mu[\Phi_k] + \Ex_\mu[\Phi_k^2]
\left(\sum_{i=1}^{k-1}
  \bH^\mu[\Phi_i]\prod_{j\neq i} \Ex_\mu[\Phi_j^2]  \right)\\
&=& \sum_{i=1}^k
  \bH^\mu[\Phi_i]\prod_{j\neq i} \Ex_\mu[\Phi_j^2].
\end{eqnarray*}
Here in the first equality we use the fact that if $f:\bn_\mu\to\R$ does
not depend on coordinate $i\in [n]$ then $\wtf(S) = 0$ for all $S\ni
i$ (\ie the Fourier spectrum of $f$ is supported on sets containing
only its relevant variables).  The third equality is by Parseval's,
and the fourth by the induction hypothesis applied to $h$. \medskip

The formula for influence follows from a similar derivation:
\begin{eqnarray*}
 \Inf^\mu[h\cdot \Phi_k] &=&\mathop{\sum_{S\sse [(k-1)\ell]}}_{T\sse
   \{(k-1)\ell+1,\ldots
   k\ell\}} |S\cup T| \cdot \wt{h}(S)^2 \wt{\Phi_k}(T)^2
 \\ &=&
\sum_{S,T} |T|\cdot
\wt{h}(S)^2 \wt{\Phi_k}(T)^2 + \sum_{S,T} |S|\cdot \wt{h}(S)^2
\wt{\Phi_k}(T)^2  \\
&=& \Ex_\mu[h^2]\cdot  \Inf^\mu[\Phi_k] + \Ex_\mu[\Phi_k^2] \cdot \Inf^\mu[h]  \\
&=& \prod_{i\in [k-1]} \Ex_\mu[\Phi_i^2]\cdot  \Inf^\mu[\Phi_k] + \Ex_\mu[\Phi_k^2]
\left(\sum_{i=1}^{k-1}
  \Inf^\mu[\Phi_i]\prod_{j\neq i} \Ex_\mu[\Phi_j^2]  \right) \\
&=& \sum_{i=1}^k
  \Inf^\mu[\Phi_i]\prod_{j\neq i} \Ex_\mu[\Phi_j^2],
\end{eqnarray*}
and this completes the proof.
\end{proof}

\begin{replemma}{lem:one-term-in-ent}
Let $F:\bits^k_\mu\to\bits$.  Let $S\sse [k]$, $S\neq\emptyset$, and
suppose $\wt{F}(S)\neq 0$.
For any $j\in S$ we have
\[ \wt{F}(S)^2 \log\left(\frac{\prod_{i\in
      S}\sigma_i^2}{\wt{F}(S)^2}\right) \leq \frac{2^{2k}}{
\ln 2} \cdot \Varx_\mu[D_{\phi^\mu_j}F]. \]
%\quad \left (RHS = \frac{2^{2k} \sigma_j^2 }{
%\ln(2)}\cdot\Var_\mu[D_jF]\right).  \]
\end{replemma}

\begin{proof}
Recall that $\wt{F}(S) = \Ex_\mu[\circ_{i\in S}D_{\phi^\mu_i} f] =
\prod_{i\in S}\sigma_i \Ex_\mu[D_{x^S} f]$, and so
\begin{eqnarray*}
\wt{F}(S)^2 \log\left(\frac{\prod_{i\in S}\sigma_i^2}{\wt{F}(S)^2}\right)
  &=& \prod_{i\in S}\sigma_i^2\cdot
  \Ex_\mu[D_{x^S}F]^2\log\left(\frac1{\E[D_{x^S}F]^2}\right) \nonumber \\
&\leq& \frac1{\ln 2}\prod_{i\in S}\sigma_i^2 \cdot\big|\Ex_\mu[D_{x^S}F]\big|
\nonumber \\
&\leq& \frac1{\ln 2}\prod_{i\in S}\sigma_i^2 \Prx_\mu[D_{x^S}F \neq
0].  \nonumber
\end{eqnarray*}
Here the
first inequality holds since $t^2\log(1/t^2)\leq t/\ln(2)$
for all $t\in \R^+$, and the second uses the
fact that $D_{x^S}F$ is bounded within
$[-1,1]$. Therefore it suffices to argue that
\begin{eqnarray*} \prod_{i\in S}\sigma_i^2 \Prx_\mu[D_{x^S}F \neq 0]
  &\leq& 2^{2k}\cdot \Varx_\mu[D_{\phi^\mu_j}F] \\
&=& 2^{2k}\sigma_j^2\cdot \Varx_\mu[D_jF] \\ &=&
2^{2k}\sigma_j^2 \Ex_{y\in \bits^{[n]\backslash
  S}}\left[\Ex_{z\in\bits^{S\backslash\{j\}}}
\left[((D_jF)|_y(z)-\mu)^2\right]\right], \end{eqnarray*}
where $\mu = \E[D_jF]$ and $(D_jF)|_y$ denotes the restriction of
$D_jF$ where the coordinates in $[n]\backslash S$ are set according to
$y$.  We first rewrite the desired inequality above as
\[ \left(2^{-2k} \prod_{i \in S\backslash\{j\}}\sigma_i^2 \right)
\Ex_{y\in\bits^{[n]\backslash S}}[\ind_{D_{x^S}F(y) \neq 0}] \] \[ \leq
\Ex_{y\in \bits^{[n]\backslash
  S}}\left[\Ex_{z\in\bits^{S\backslash\{j\}}}
\left[((D_jF)|_y(z)-\mu)^2\right]\right]
\]
and argue that this holds point-wise: for
every $y\in [n]\backslash S$ such that $D_{x^S}F(y) \neq 0$,
\[ \E \left[((D_jF)|_y(z)-\mu)^2\right] \geq 2^{-2k}\prod_{i\in
  S\backslash\{j\}}\sigma_i^2. \] To see this, fix
$y\in\bits^{[n]\backslash S}$ such that $(D_{x^S}F)(y)\neq 0$.
Viewing $(D_{x^S}F)$ as $(D_{x^{S\backslash\{j\}}} D_jF)$, it follows
that $(D_jF)|_y$ is non-constant.  Since $(D_jF)|_y$ takes values in
$\{-1,0,1\}$, there must exist some $z^*\in\bits^{S\backslash\{j\}}$
such that $|(D_jF)|_y(z^*)-\mu|\geq \frac1{2}$ and so indeed
\begin{eqnarray*}
  \E
\left[((D_jF)|_y(z)-\mu)^2\right] &\geq& \left(\frac1{2}\right)^2
\Pr[z = z^*] \\ &=&
\frac1{4} \prod_{i\in
  S\backslash\{j\}}\frac{1\pm\mu_i}{2} \ \geq \
\frac1{4}\prod_{i\in S\backslash\{j\}}\frac{\sigma_i^2}{4} \ \geq\
2^{-2k}\prod_{i\in S\backslash\{j\}}\sigma_i^2. \end{eqnarray*}
\end{proof}
\end{document}